\documentclass[10pt]{article}
\usepackage[margin=1in]{geometry}

\usepackage{style/qstyle}

\title{Deciding Whether a C–Q Channel Preserves a Bit is QCMA-Complete}
\author[1]{Kiera Hutton}
\author[1]{Arthur Mehta}
\author[2]{Andrej Vukovic}
\affil[1]{ Department of Mathematics and Statistics, University of Ottawa\footnote{\texttt{khutt011@uottawa.ca, amehta2@uottawa.ca}}}
\affil[2]{ Department of Systems and Computer Engineering, Carleton University\footnote{\texttt{andrejvukovic@cunet.carleton.ca}}}
\date{\vspace{-10mm}}

\begin{document}

\pagenumbering{roman}
\setcounter{page}{1}

\maketitle

\begin{abstract}
We prove that deciding whether a classical–quantum (C–Q) channel can exactly preserve a single classical bit is QCMA-complete. This “bit-preservation” problem is a special case of orthogonality-constrained optimization tasks over C–Q channels, in which one seeks orthogonal input states whose outputs have small or large Hilbert–Schmidt overlap after passing through the channel. Both problems can be cast as biquadratic optimization with orthogonality constraints. Our main technical contribution uses tools from matrix analysis to give a complete characterization of the optimal witnesses: computational basis states for the minimum, and $\ket{+}$, $\ket{-}$ over a single basis pair for the maximum. Using this characterization, we give concise proofs of $\mathrm{QCMA}$-completeness for both problems.
\end{abstract}

\pagenumbering{arabic}
\setcounter{page}{1}

\section{Introduction}
A growing line of work has studied the complexity of encoding classical information to withstand noise arising from a single use of a quantum channel~$\Phi$~\cite{BS08,CM23,DFKR25}. In the one-shot zero-error setting, this condition can be expressed either in terms of trace distance,
\[
\tfrac{1}{2}\|\Phi(\ketbra{u})-\Phi(\ketbra{v})\|_1=1,
\]
or, equivalently, as vanishing Hilbert–Schmidt overlap,
\[
\Tr(\Phi(\ketbra{u})\,\Phi(\ketbra{v}))=0.
\]
Determining whether such a pair exists is the same as deciding whether the channel is able to exactly preserve a single classical bit.  
Our work shows this problem is $\mathrm{QCMA}$-complete for the set of classical–quantum (C–Q) channels.
\begin{theorem}[Informal]\label{thm:exact case}
Deciding whether a C–Q channel preserves a classical bit is a $\mathrm{QCMA}$-complete problem.\footnote{Throughout, we assume channels are specified succinctly by a polynomial-size quantum circuit.} 
\end{theorem}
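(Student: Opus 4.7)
The plan is to handle containment and hardness separately, using the paper's main technical characterization of optimal witnesses to simplify the QCMA verifier, and a direct gadget reduction from a canonical $\mathrm{QCMA}$-complete problem for the lower bound.

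For containment in $\mathrm{QCMA}$, the central observation is that for any C--Q channel $\Phi$, the minimum of $\Tr(\Phi(\ketbra{u})\Phi(\ketbra{v}))$ over orthogonal $\ket{u},\ket{v}$ is attained on a pair of distinct computational basis states, which is precisely the main technical result advertised in the abstract. Hence, whenever $\Phi$ preserves a classical bit there is a purely classical witness consisting of two input strings $i\neq j$ with $\Tr(\Phi(i)\Phi(j)) = 0$. The verifier receives $(i,j)$ from Merlin, invokes the channel circuit to prepare polynomially many copies of $\Phi(i)$ and $\Phi(j)$, performs SWAP tests to estimate $\Tr(\Phi(i)\Phi(j))$ to inverse-polynomial additive precision, and accepts iff the estimate lies below a suitable threshold. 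Chernoff amplification gives the standard $\mathrm{QCMA}$ completeness/soundness gap.

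For $\mathrm{QCMA}$-hardness, I would reduce from a canonical $\mathrm{QCMA}$-complete promise problem: given a poly-size verifier circuit $V$ acting on $\ket{w}\ket{0^m}$ for classical $w\in\{0,1\}^n$, distinguish the YES case $\exists w^*\colon \Pr[V\text{ accepts }w^*]\ge 1-\epsilon$ from the NO case $\forall w\colon \Pr[V\text{ accepts }w]\le \epsilon$, where $\epsilon$ is made inverse-polynomially small by standard amplification. Given such $V$, define a C--Q channel $\Phi$ on classical input $(w,b)\in\{0,1\}^n\times\{0,1\}$ producing a qutrit output as follows: coherently run $V$ on $\ket{w}\ket{0^m}$, and then, using the accept/reject flag qubit as control, either copy $\ket{b}$ into a fresh output qutrit (on the accept branch) or set the output qutrit to $\ket{2}$ (on the reject branch); finally trace out everything except the output qutrit. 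A direct calculation yields $\Phi(w,b) = p_w \ketbra{b} + (1-p_w)\ketbra{2}$, where $p_w$ is the acceptance probability of $V$ on $w$. In the YES case a witness $w^*$ gives $\Tr(\Phi(w^*,0)\Phi(w^*,1)) = \epsilon^2$, while in the NO case every pair of outputs has Hilbert--Schmidt overlap at least $(1-\epsilon)^2$, producing the required promise gap. The reduction is clearly polynomial-time in the size of $V$.

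The main obstacle I anticipate is reconciling the \emph{exact} orthogonality condition in the informal theorem with the approximate overlap produced by the reduction. The cleanest resolution is to formalize Theorem~\ref{thm:exact case} as a standard promise problem whose YES instances have overlap at most inverse-polynomially small and whose NO instances have overlap $\Omega(1)$, which the reduction above matches out of the box. If one insists on literal exact orthogonality in the YES case, the construction can instead be seeded with a $\mathrm{QCMA}_1$-complete source problem, in which case perfect completeness propagates through the gadget and $\Tr(\Phi(w^*,0)\Phi(w^*,1))$ is exactly $0$. A secondary subtlety is implementing the qutrit-valued control coherently inside a standard qubit circuit and tracking the SWAP-test error budget; neither step is difficult, but both are needed to obtain clean parameters.
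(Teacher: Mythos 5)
Your proposal is correct and follows essentially the same route as the paper: containment via the characterization that the minimum overlap is attained on a pair of distinct computational basis states (so a classical witness plus SWAP test suffices), and hardness via a verifier-gated C--Q channel (accept $\rightarrow$ transmit the bit, reject $\rightarrow$ fixed junk state), with exact orthogonality in the YES case recovered from perfect completeness, $\mathrm{QCMA}=\mathrm{QCMA}_{1,\epsilon}$, exactly as the paper does. The only cosmetic difference is your gadget encodes the bit as an explicit extra input with a qutrit flag state $\ket{2}$, whereas the paper reuses the first bit of the witness and outputs $\ketbra{y_1,\neg y_1}$ versus $\ketbra{0,0}$.
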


\Cref{thm:exact case} follows from a more general analysis in which we also relax the zero-error requirement. When this requirement is relaxed, the trace-distance and overlap formulations give rise to related but distinct optimization problems.
In~\cite{DFKR25}, the problem of approximating the trace norm contraction coefficient $\eta_{\mathrm{tr}}(\Phi)=\max_{\rho,\sigma}\|\Phi(\rho)-\Phi(\sigma)\|_1/\|\rho-\sigma\|_1$ was shown to be $\mathrm{NP}$-hard. Because computing trace distance is not sample-efficient in general~\cite{HHJ+17},~\cite{DFKR25} considers channels $\Phi$ given by an explicit Kraus representation, allowing this ratio to be computed numerically. Another approach, studied in~\cite{CM23}, considers the problem of approximating the minimum (or maximum) of the Hilbert–Schmidt overlap $\Tr(\Phi(\ketbra{u})\,\Phi(\ketbra{v}))$. While the overlap of mixed states lacks the direct operational interpretation of trace distance, it is efficiently estimable via the SWAP test, in contrast to the trace distance.
In this setting, channels~$\Phi$ may act on polynomially many qubits and are specified succinctly by a polynomial-size quantum circuit~$C$. Since the overlap is efficiently estimable from $\ket{u} \otimes \ket{v}$, these optimization problems lie in $\mathrm{QMA}(2)$.

The class $\mathrm{QMA}(2)$ is one of several quantum analogues of $\mathrm{NP}$; see~\cite{Gha24} for a broader survey. Among these classes, some differ by the structure of the witness—for example, $\mathrm{QCMA}$ uses a classical string, $\mathrm{QMA}$ an arbitrary quantum state, and $\mathrm{QMA}(2)$ two unentangled quantum states. These satisfy $\mathrm{QCMA} \subseteq \mathrm{QMA} \subseteq \mathrm{QMA}(2)$, with the containments believed to be strict. While numerous $\mathrm{QMA}$-complete problems are known~\cite{Boo13}, only a few natural complete problems have been identified for $\mathrm{QMA}(2)$ and $\mathrm{QCMA}$. Examples include those in~\cite{CS12,GHMW15} for $\mathrm{QMA}(2)$ and in~\cite{GS15} for $\mathrm{QCMA}$. In~\cite{CM23}, it was shown that the overlap-optimization problems we study are $\mathrm{QMA}(2)$-complete when quantified over general quantum channels, and $\mathrm{QMA}$-complete when restricted to entanglement-breaking channels. 

In this work, we study the complexity of overlap-based optimization problems for classical-quantum (C–Q) channels. Our main result establishes $\mathrm{QCMA}$-completeness for the following problems.

\begin{theorem}\label{thm:Main}
There exist functions $c, s : \mathbb{N} \rightarrow [0,1]$ with $c - s = 1/\mathrm{poly}(n)$ such that the following promise problems are $\mathrm{QCMA}$-complete. Each takes as input a quantum circuit $C$ implementing a classical–quantum channel~$\Phi_C$.
\begin{enumerate}
    \item \textit{\small Small Overlap Problem.} Decide whether there exist orthogonal pure states $\ket{u}, \ket{v}$ such that
    \[
        \Tr\bigl(\Phi_C(\ketbra{u}{u}) \, \Phi_C(\ketbra{v}{v})\bigr) \leq 1 - c,
    \]
    or whether for all orthogonal $\ket{u}, \ket{v}$,
    \[
        \Tr\bigl(\Phi_C(\ketbra{u}{u}) \, \Phi_C(\ketbra{v}{v})\bigr) \geq 1 - s.
    \]

    \item \textit{\small Large Overlap Problem.} Decide whether there exist orthogonal pure states $\ket{u}, \ket{v}$ such that
    \[
        \Tr\bigl(\Phi_C(\ketbra{u}{u}) \, \Phi_C(\ketbra{v}{v})\bigr) \geq c,
    \]
    or whether for all orthogonal $\ket{u}, \ket{v}$,
    \[
        \Tr\bigl(\Phi_C(\ketbra{u}{u}) \, \Phi_C(\ketbra{v}{v})\bigr) \leq s.
    \]
\end{enumerate}
\end{theorem}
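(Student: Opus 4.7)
The plan is to split the theorem into two parts: a matrix-analytic structural result identifying the optimal orthogonal pairs $\ket{u},\ket{v}$, and separate QCMA upper and lower bounds built on top of it. Writing $\rho_i := \Phi_C(\ketbra{i}{i})$, the C--Q structure of $\Phi_C$ collapses the objective to a bilinear form $p^{\top} M q$ in the nonnegative Gram matrix $M_{ij}=\Tr(\rho_i\rho_j)$, where $p_i = |\langle i|u\rangle|^2$ and $q_i = |\langle i|v\rangle|^2$; the orthogonality condition $\langle u|v\rangle = 0$ enters only through a phase constraint on the amplitudes, feasible whenever $(\sqrt{p_iq_i})_i$ admits phases summing to zero.

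The main obstacle is the structural characterization itself: I must show that the minimum of $p^{\top}Mq$ over the feasible set is attained at a pair of distinct computational basis states $\ket{i},\ket{j}$, and the maximum at a two-index uniform pair $\ket{u},\ket{v} = (\ket{i}\pm\ket{j})/\sqrt{2}$. For the minimum, the strategy is to compare any feasible pair against the point-mass pair with $(i,j)=\arg\min_{k\neq\ell}M_{k\ell}$, and to use entry-wise nonnegativity of $M$ together with the phase-orthogonality constraint to argue that any mass placed outside a two-index support strictly increases $p^{\top}Mq$. For the maximum, I would use a convexity/majorization argument on $(p,q)$ in the bilinear form: the cross-terms $p_iq_j+p_jq_i$ of $M_{ij}$ should dominate, and a reduction to two-index support then forces a $\ket{\pm}$-type orthogonality configuration. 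This is the delicate heart of the argument and where matrix-analytic tools are needed most.

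Given this characterization, QCMA containment is immediate. The witness is the classical bit string encoding a pair of indices $(i,j)$, of length polynomial in the description of $\Phi_C$. The verifier prepares $\ket{i},\ket{j}$ for the Small Overlap Problem, or $(\ket{i}\pm\ket{j})/\sqrt{2}$ for the Large Overlap Problem, applies $\Phi_C$ to each, and uses the SWAP test on polynomially many independently prepared output copies to estimate $\Tr(\Phi_C(\ketbra{u}{u})\,\Phi_C(\ketbra{v}{v}))$ to inverse-polynomial accuracy, with standard Chernoff amplification across the promise gap.

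For QCMA-hardness I would reduce from generic QCMA verification. Given a verifier $V$ on input $x$ with classical witness $w$, construct a C--Q channel $\Phi_x$ on classical inputs of the form $(b,w)$ with $b\in\{0,1\}$ by running $V(x,w)$, measuring its acceptance qubit, and using the outcome to drive a gadget that outputs $\ketbra{b}{b}$ on acceptance and the maximally mixed qubit $I/2$ on rejection. For the Small Overlap Problem, a YES instance yields the orthogonal computational-basis pair $(0,w^*),(1,w^*)$ with outputs $\approx\ketbra{0}{0},\ketbra{1}{1}$ and overlap $\approx 0$, while in a NO instance every $\rho_{b,w}$ is close to $I/2$ and by the characterization the minimum pairwise overlap is $\approx 1/2$, giving a constant promise gap. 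For the Large Overlap Problem, swap the accept/reject outputs so that acceptance produces a fixed pure state $\ketbra{0}{0}$; a YES instance then gives the $(\ket{0,w^*}\pm\ket{1,w^*})/\sqrt{2}$ pair with outputs near $\ketbra{0}{0}$ and overlap $\approx 1$, while the characterization forces the NO-case maximum over any such uniform two-index pair to be $\approx 1/2$. Standard gap amplification of the underlying QCMA verifier then matches these overlaps to the stated $c$ and $s$.
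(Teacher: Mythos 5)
Your overall architecture is the same as the paper's: a structural characterization showing that the minimum overlap is controlled by distinct computational-basis pairs and the maximum by $(\ket{i}\pm\ket{j})/\sqrt{2}$ pairs, followed by QCMA containment (classical witness $(i,j)$ plus SWAP test) and QCMA-hardness (a C--Q channel that measures its input, runs the verifier, and outputs either a designated basis state on acceptance or a ``junk'' state on rejection). Your hardness gadgets differ from the paper's only in inessential details --- the paper invokes perfect completeness ($\mathrm{QCMA}=\mathrm{QCMA}_{1,\epsilon}$) and outputs $\ketbra{y_1,\neg y_1}$ versus $\ketbra{0,0}$, while you use amplification and a $(b,w)$ input register --- and both versions go through for this theorem.

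The genuine gap is that the structural characterization, which you yourself flag as ``the delicate heart,'' is never proved, and your sketch for the minimum rests on tools that are provably insufficient. You claim entry-wise nonnegativity of $M_{ij}=\Tr(\sigma_i\sigma_j)$ together with the phase-orthogonality constraint should yield $p^{\top}Mq \geq \min_{i\neq j}M_{ij}$; this is false for general entrywise-nonnegative symmetric $M$. Take $M_{11}=0$ and $M_{12}=M_{21}=M_{22}=1$ with $\ket{u}=(\ket{1}+\ket{2})/\sqrt{2}$ and $\ket{v}=(\ket{1}-\ket{2})/\sqrt{2}$: then $p=q=(\tfrac12,\tfrac12)$ and $p^{\top}Mq=\tfrac34<1=\min_{i\neq j}M_{ij}$, even though the phase-feasibility condition you state is satisfied. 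What makes the true statement hold is the Gram structure of $M$: by Cauchy--Schwarz, $\Tr(\sigma_i\sigma_j)\le\sqrt{\Tr(\sigma_i^2)\Tr(\sigma_j^2)}$, so at most one diagonal entry can lie below $\theta:=\min_{i\neq j}\Tr(\sigma_i\sigma_j)$, and the paper's proof of the minimum bound isolates that single dangerous diagonal term and absorbs it using the orthogonality-derived triangle inequality $a_1b_1\le\sum_{i>1}a_ib_i$, a generalized Cauchy--Schwarz identity, and the trace AM--GM inequality $\Tr(AB)\le\tfrac12\bigl(\Tr(A^2)+\Tr(B^2)\bigr)$. None of these ingredients appear in your sketch, and ``any mass outside a two-index support strictly increases the value'' is not even the right statement to aim for (one needs a lower bound by the minimizing basis pair, not monotonicity in support size). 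Your sketch for the maximum --- convexity and reduction to two-index supports --- is closer in spirit to the paper's argument, which bounds the objective by $\Tr(\bar A^2)$, recognizes the coefficient vector as the diagonal of a rank-$k$ projection, and applies Schur--Horn plus Bauer's maximum principle; but it, too, is only a gesture, so the theorem as a whole remains unestablished by the proposal.
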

Here a \emph{classical–quantum} (C–Q) channel is defined as follows.

\begin{definition}[C–Q channel~\cite{HSR03}]
    A C–Q channel~$\Phi$ from an $n$-qubit space to an $m$-qubit space is specified by $2^n$ states $\sigma_1, \dots, \sigma_{2^n}$ on $m$ qubits and acts as
\begin{equation}\label{eq:C–Q Channel}
\Phi(\rho) = \sum_{i=1}^{2^n} \Tr(\ketbra{i} \rho )\,\sigma_i.
\end{equation}
\end{definition}

Given a C–Q channel $\Phi$, and input states, $\ket{u}=\sum_i \alpha_i \ket{i}$ and $\ket{v}=\sum_i\beta_i \ket{i}$, the overlap of the outputs is 
\begin{equation}\label{eq:C–Q Overlap}
     \Tr\bigl(\Phi(\ketbra{u}{u}) \, \Phi(\ketbra{v}{v})\bigr) = \sum_{i,j} |\alpha_i|^2 |\beta_j|^2 \Tr(\sigma_i \sigma_j).
\end{equation}
In more detail, we show that these problems lie in $\mathrm{QCMA}$ for any $c,s$ with inverse-polynomial gap, and that the constant-gap versions are $\mathrm{QCMA}$-hard. In particular, since the Small Overlap Problem is shown to be $\mathrm{QCMA}$-hard when $c=1$, \Cref{thm:exact case} follows as a corollary.

The more general $k$-Clique and $k$-Independent Set problems studied in~\cite{CM23} draw inspiration from quantum graph theory~\cite{DSW13,Sta16,Gan21,Mat24}, and the problems in \cref{thm:Main} correspond to the case $k=2$. When restricting to C–Q channels these problems become an instance of the well-studied class of \emph{biquadratic programs}~\cite{LNQY10}, where one optimizes a quartic form that is quadratic in two unit vectors. Even computing a constant-factor approximation for such problems is $\mathrm{NP}$-hard~\cite{LNQY10}. Our setting adds the additional requirement that the two input states be orthogonal. While quadratic optimization with orthogonality constraints is well-studied~\cite{EAS98,AMS08,WY13}, to our knowledge this is the first work to consider such constraints in biquadratic programs.

If one only considers orthogonal computational basis states $\ket{i}$ and $\ket{j}$ with $i\neq j$, the overlap reduces to $\Tr(\sigma_i\sigma_j)$, so minimizing or maximizing~\Cref{eq:C–Q Overlap} amounts to optimizing $\Tr(\sigma_i\sigma_j)$ over distinct indices. For general orthogonal states $\ket{u}=\sum_i\alpha_i\ket{i}$ and $\ket{v}=\sum_i\beta_i\ket{i}$, the overlap also includes terms of the form $|\alpha_i|^2|\beta_i|^2\Tr(\sigma_i^2)$, where $\Tr(\sigma^2_i)$ can be larger or smaller than $\Tr(\sigma_i\sigma_j)$ for all $i \neq j$.
Our main technical contribution, proved in \Cref{subsec:technical}, is that orthogonality still allows for a concise characterization of the min/max of \cref{eq:C–Q Overlap}, which are both determined by a single pair $i \neq j$. We present the following simplified version of these results below.

\begin{theorem}[Simplified version of \cref{thm:Min Overlap Bound} and \cref{thm:Max overlap}]\label{thm:Technical} 
Let $\ket{u} = \sum_i \alpha_i \ket{i}$ and $ \ket{v}= \sum_j \beta_j \ket{j}$ be orthogonal pure states in $\mathbb{C}^n$, and let $\sigma_1, \dots, \sigma_n$ be arbitrary (possibly mixed) states in $\mathbb{M}_d(\mathbb{C})$. Then:
\begin{subequations}
\begin{align}
\sum_{i,j} |\alpha_i|^2 |\beta_j|^2 \Tr(\sigma_i \sigma_j) &\geq \min \left\{ \Tr(\sigma_i \sigma_j) : i \neq j \right\},\label{eq:Technical min}  \\
\sum_{i,j} |\alpha_i|^2 |\beta_j|^2 \Tr(\sigma_i \sigma_j) &\leq \max \left\{ \frac{1}{4} \Tr\left( (\sigma_i + \sigma_j)^2 \right) : i \neq j \right\}.\label{eq:Technical max} 
\end{align}
\end{subequations}
\end{theorem}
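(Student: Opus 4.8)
The plan is to reduce both inequalities to the case where $\ket u$ and $\ket v$ are supported on a single pair of coordinates, where an elementary trace identity settles them, and then to argue that the extremum of the biquadratic form is always attained at such a configuration. Throughout I write $p_i=|\alpha_i|^2$ and $q_j=|\beta_j|^2$, so that $p,q$ are probability vectors and the overlap becomes the bilinear form $Q=\sum_{i,j}p_iq_j M_{ij}$ in the real, symmetric, positive semidefinite Gram matrix $M_{ij}=\Tr(\sigma_i\sigma_j)$. The only way the states enter is through $M$, and the facts I will lean on are $\Tr((\sigma_i-\sigma_j)^2)=M_{ii}+M_{jj}-2M_{ij}\ge 0$ and $\Tr((\sigma_i+\sigma_j)^2)=M_{ii}+M_{jj}+2M_{ij}$, together with $M_{ij}\le\sqrt{M_{ii}M_{jj}}$ and $M_{ii}=\Tr(\sigma_i^2)\le 1$, all of which follow from positive semidefiniteness of $M$.

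First I would dispose of the two-coordinate base case. When $\ket u,\ket v$ are supported on $\{i,j\}$, orthogonality forces the magnitudes of $\ket v$ to be the reversal of those of $\ket u$; writing $p=(c,1-c)$ and $q=(1-c,c)$ on the pair and expanding gives the exact identities
\[
Q - M_{ij} = c(1-c)\,\Tr\!\big((\sigma_i-\sigma_j)^2\big),\qquad \tfrac14\Tr\!\big((\sigma_i+\sigma_j)^2\big) - Q = \big(\tfrac14 - c(1-c)\big)\,\Tr\!\big((\sigma_i-\sigma_j)^2\big).
\]
Since $c(1-c)\in[0,\tfrac14]$ and $\Tr((\sigma_i-\sigma_j)^2)\ge 0$, both right-hand sides are nonnegative, which is exactly \eqref{eq:Technical min} and \eqref{eq:Technical max} restricted to the pair; it also shows that the maximum bound is saturated precisely by the balanced superposition $c=\tfrac12$, i.e.\ by the $\ket{+},\ket{-}$ witnesses.

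The substance is then the reduction of general orthogonal states to a two-coordinate configuration. The useful structural observation is that, fixing $\ket v$, the overlap is a \emph{diagonal} quadratic form $Q=\bra u G_v\ket u$ with $G_v=\sum_i \Tr(\sigma_i\,\Phi(\ketbra v{v}))\,\ketbra i{i}$, whose eigenvectors are exactly the computational basis states, and symmetrically $Q=\bra v G_u\ket v$. For the minimum this already explains why basis states are optimal: minimizing a diagonal form over the hyperplane $v^\perp$ drives $\ket u$ toward a single basis direction, while the orthogonality constraint prevents $\ket u$ from collapsing onto a coordinate already occupied by $\ket v$, which is precisely what removes the self-overlap terms $M_{ii}$ and leaves $\min_{i\neq j}M_{ij}$. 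I would make this rigorous by an exchange argument at an optimizer: moving probability weight between two coordinates while readjusting phases so as to preserve both normalization and $\langle u\mid v\rangle=0$ keeps $Q$ monotone, so one can drive the support down to two coordinates and invoke the base case. A convenient bookkeeping tool is the fact that fixed magnitudes $p,q$ admit an orthogonal realization if and only if they satisfy the polygon condition $2\max_k\sqrt{p_kq_k}\le\sum_i\sqrt{p_iq_i}$, which turns orthogonality into an explicit inequality on the diagonal weights $p_iq_i$.

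The main obstacle is exactly this reduction step, because the orthogonality constraint is quadratic and the feasible set of $(p,q)$ is not convex, so one cannot simply appeal to linear-programming extreme points. Positive semidefiniteness of $M$ is what rescues the argument: it forbids the naively dangerous configuration in which a coordinate has self-overlap far below the off-diagonal minimum, since with $m:=\min_{i\neq j}M_{ij}$ the bound $M_{ij}\le\sqrt{M_{ii}M_{jj}}\le\sqrt{M_{ii}}$ forces $M_{ii}\ge m^2$ for every $i$. Consequently every extreme configuration that survives the exchange argument is controlled by the two-coordinate identities above. I expect the careful part of the write-up to be verifying that the weight-shifting exchange can always be carried out without breaking orthogonality, which is exactly where the polygon characterization does the real work.
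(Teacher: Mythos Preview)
Your two–coordinate base case is correct and nicely packaged, and the polygon characterization of which magnitude pairs $(p,q)$ admit orthogonal realizations is also right. The gap is exactly where you flagged it: the reduction from a general orthogonal pair to a two–coordinate one is only gestured at, and the mechanism you propose does not obviously work. Fixing $q$, the form $Q$ is linear in $p$, so shifting $p$-weight toward the index with smallest $L_i=\sum_j q_jM_{ij}$ decreases $Q$; but concentrating $p$ at index $i$ forces $q_i=0$ via the polygon constraint, so in general you must move $q$ simultaneously, and there is no argument that the coupled move stays monotone. Your PSD observation $M_{ii}\ge m^2$ is true but too weak to control this: it still permits $M_{ii}\ll m$ when $m<1$. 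The sharper fact you need (and which the paper uses) is that at most \emph{one} index can satisfy $M_{ii}<m$, since two such indices $i\neq j$ would give $M_{ij}\le\sqrt{M_{ii}M_{jj}}<m$.

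The paper does \emph{not} reduce to two coordinates. For \eqref{eq:Technical min} it singles out the unique possible bad index (say $i=1$), encodes the triangle inequality as $a_1b_1=r\big(\sum_{i>1}a_ib_i+|\langle u|v\rangle|\big)$ with $r\in[0,1]$, and then lower-bounds $s$ by a direct algebraic computation using the polarization identity $\sum_{ij}a_i^2b_j^2\Tr(\sigma_i\sigma_j)-\Tr(\sum_i a_ib_i\sigma_i)^2=\tfrac12\sum_{ij}(a_ib_j-a_jb_i)^2\Tr(\sigma_i\sigma_j)$ together with matrix AM--GM to eliminate the single $\Tr(\sigma_1^2)$ term. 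For \eqref{eq:Technical max} it avoids any exchange argument entirely: from $\Tr\big((A_u-A_v)^2\big)\ge 0$ one gets $Q\le\tfrac14(p+q)^{\!\top}M(p+q)$; the vector $c=p+q$ is the diagonal of the rank-two projection $\ketbra{u}{u}+\ketbra{v}{v}$, hence by Schur--Horn lies in the polytope with vertices $e_i+e_j$ ($i\neq j$); and $c\mapsto c^{\!\top}Mc$ is convex, so its maximum over that polytope is $\max_{i\neq j}\Tr\big((\sigma_i+\sigma_j)^2\big)$. If you want to salvage your approach, the maximum side is the easier one to repair along these lines; the minimum side genuinely seems to require either the paper's ``isolate the bad index'' trick or a much more careful exchange argument than what you have sketched.
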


Given \cref{thm:Technical}, containment in $\mathrm{QCMA}$ follows immediately using the SWAP test, and $\mathrm{QCMA}$-hardness is relatively straightforward, as shown in \cref{subsec:Hardness}. Thus, the core of our contribution relies more on techniques from matrix analysis than complexity theoretic arguments.

We note that our technical results are stronger than \cref{thm:Technical}, and generalize along two distinct axes: \cref{thm:Min Overlap Bound} handles non-orthogonal states $\ket{u},\ket{v}$, yielding a lower bound in terms of $\braket{u}{v}$ and $\min_{i\neq j}\Tr(\sigma_i\sigma_j)$. Meanwhile, \cref{thm:Max overlap} considers maximizing the average overlap of $k$ pairwise orthogonal states and immediately implies the $k$-Clique problem of \cite{CM23}, for C–Q channels is in $\mathrm{QCMA}$. Unfortunately, extending \cref{thm:Min Overlap Bound} in a similar fashion appears more challenging. We propose the following conjecture for future study.

\begin{conjecture}\label{conj:Main}
Let $\sigma_1, \dots, \sigma_n \in \mathbb{M}_d(\mathbb{C})$ be arbitrary quantum states. Then for any collection of $k \leq n$ of pairwise orthogonal pure states $\ket{u_1}, \dots, \ket{u_k} \in \mathbb{C}^n$, the following bound holds:
 
\begin{equation*}
\frac{1}{k(k-1)} \sum_{r \ne s} \Tr\left( \left( \sum_i |\braket{i}{u_r}|^2 \sigma_i \right) \left( \sum_j |\braket{j}{u_s}|^2 \sigma_j \right) \right)
\;\;\geq\;\;
\min_{\substack{i_1, \dots, i_k \in [n] \\ \text{distinct}}}
\frac{1}{k(k-1)} \sum_{r \ne s} \Tr\left( \sigma_{i_r} \sigma_{i_s} \right).
\end{equation*}
\end{conjecture}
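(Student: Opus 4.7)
The plan is to reformulate the inequality, settle the full-rank case $k=n$ via elementary convexity, and for $k<n$ attempt a polytope-extremality argument, identifying the obstruction that remains.

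Set $p_{ir}=|\braket{i}{u_r}|^2$, $\pi_i=\sum_r p_{ir}$, $\rho_r = \sum_i p_{ir}\sigma_i$, and $\tau=\sum_r\rho_r = \sum_i \pi_i\sigma_i$. Expanding $\Tr(\tau^2)$ and $\Tr(\tau_S^2)$ with $\tau_S=\sum_{i\in S}\sigma_i$ yields
\begin{align*}
\sum_{r\ne s}\Tr(\rho_r\rho_s) &= \Tr(\tau^2)-\sum_r\Tr(\rho_r^2),\\
\sum_{a\ne b\in S}\Tr(\sigma_a\sigma_b) &= \Tr(\tau_S^2)-\sum_{a\in S}\Tr(\sigma_a^2),
\end{align*}
so the conjecture becomes a direct comparison of these expressions. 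In the full-rank case $k=n$, every $\pi_i=1$, hence $\tau=\tau_{[n]}$ and the inequality reduces to $\sum_r\Tr(\rho_r^2)\le \sum_i\Tr(\sigma_i^2)$. This follows from Jensen applied to the convex map $X\mapsto \Tr(X^2)$: since $\sum_i p_{ir}=1$, we get $\Tr(\rho_r^2)\le \sum_i p_{ir}\Tr(\sigma_i^2)$, and summing over $r$ uses $\sum_r p_{ir}=1$ (which holds only when $k=n$) to close the argument.

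For $k<n$, the natural attempt is to view $f(P)=\sum_{r\ne s}\sum_{ij}p_{ir}p_{js}\Tr(\sigma_i\sigma_j)$ as a polynomial on $P$ and extend its domain from the (non-convex) set of orthostochastic matrices to the polytope
\begin{equation*}
\mathcal P_k = \Bigl\{P\in\mathbb{R}_{\ge 0}^{n\times k} : \textstyle\sum_i p_{ir}=1\ \forall r,\ \sum_r p_{ir}\le 1\ \forall i\Bigr\}.
\end{equation*}
A direct vertex count shows the extreme points of $\mathcal P_k$ are exactly the $0/1$ matrices whose columns are distinct standard basis vectors, i.e., the basis tuples with distinct indices. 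If $f$ were concave on $\mathcal P_k$, its minimum would be attained at such a vertex and the conjecture would follow immediately. The difficulty is that the Hessian of $f$ equals $2\,T\otimes(J_k-I_k)$ with $T_{ij}=\Tr(\sigma_i\sigma_j)$, and this is indefinite whenever $T\ne 0$ because $J_k-I_k$ has eigenvalues $k-1$ and $-1$. Thus $f$ is neither convex nor concave on $\mathcal P_k$, and a naive polytope-extremality argument cannot succeed; the (non-convex) orthostochasticity constraint must be genuinely exploited.

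The remaining route is a first- and second-order analysis of the constrained critical points of $f$ on the Stiefel manifold $\{V\in\mathbb{C}^{n\times k}:V^*V=I_k\}$ with $V_{ir}=\braket{i}{u_r}$. Tangent directions are anti-Hermitian perturbations $V\mapsto V+\varepsilon V A$ with $A^*=-A$; the hope is that at any minimizer, either the Lagrange conditions force $P$ to a vertex of $\mathcal P_k$, or one can exhibit a rotation within some pair $(u_r,u_s)$ that strictly decreases $f$, contradicting minimality. This mirrors the mechanism behind \Cref{thm:Min Overlap Bound} for $k=2$. The principal obstacle for general $k$ is that the pairwise decrement moves underlying the $k=2$ proof must be coordinated across all $\binom{k}{2}$ pairs simultaneously while preserving mutual orthogonality, and there is no evident reason such a local move must always exist.
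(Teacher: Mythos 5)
The statement you were asked to prove is \Cref{conj:Main}, which the paper itself does not prove: it is posed explicitly as an open conjecture, with the authors remarking that extending \Cref{thm:Min Overlap Bound} beyond $k=2$ ``appears more challenging.'' So there is no proof in the paper to compare yours against, and your proposal, as you yourself state, also leaves the general case open: the Stiefel-manifold variational route in your final paragraph is a plan rather than an argument, and the conjecture for $2<k<n$ remains unresolved. That is the genuine gap, and it is the same gap the authors acknowledge.

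What you do establish is correct, and it goes beyond what the paper records. Writing $\rho_r=\sum_i|\braket{i}{u_r}|^2\sigma_i$ and $\tau=\sum_r\rho_r$, your identity $\sum_{r\ne s}\Tr(\rho_r\rho_s)=\Tr(\tau^2)-\sum_r\Tr(\rho_r^2)$ is right, and for $k=n$ the matrix $\bigl(|\braket{i}{u_r}|^2\bigr)_{i,r}$ is doubly stochastic, so convexity of $X\mapsto\Tr(X^2)$ gives $\sum_r\Tr(\rho_r^2)\le\sum_i\Tr(\sigma_i^2)$ and hence the conjecture in that case --- a clean special case not stated in the paper. Your Hessian computation is also correct: $T\otimes(J_k-I_k)$ is indefinite whenever $T\ne 0$. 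One caution, however: indefiniteness only defeats the concavity/Bauer route; it does not show that the minimum of $f$ over $\mathcal{P}_k$ fails to be attained at a vertex. Your own $k=n$ argument refutes that inference: $f$ is equally indefinite over the Birkhoff polytope $\mathcal{P}_n$, yet vertex-minimality holds there. Whether $\min_{\mathcal{P}_k}f$ equals the vertex minimum for $k<n$ is thus itself an open question, and an affirmative answer would prove the conjecture, since Bessel's inequality places every orthogonality-realizable $P$ inside $\mathcal{P}_k$. Finally, note that the paper's $k=2$ proof (\Cref{thm:Min Overlap Bound}) is not a local variational argument but a global chain of inequalities built on \Cref{lem:S-CS} and \Cref{lem:AGM}, so the ``mechanism'' you propose to mirror on the Stiefel manifold is not in fact the mechanism used there; a more promising direction may be to seek a $k$-party analogue of the algebraic cancellation in that proof.
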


\subsection{Acknowledgements} We thank Omar Fawzi for helpful comments on an earlier version of this manuscript and for discussions about \Cref{thm:exact case} and the related work \cite{DFKR25}. We also thank Connor Paddock for feedback on the manuscript and useful conversations on techniques. AM is supported by NSERC Alliance Consortia Quantum grants, reference number: ALLRP 578455 - 22 and by NSERC DG 2024-06049.

\section{Preliminaries and notation}

We assume familiarity with the basics of quantum information theory
and computational complexity theory, and refer the reader to~\cite{NC00,Wat18} for background. 
We briefly recall our notation and some specific concepts that will be used throughout the paper.

\paragraph{Notation} For a natural number $n\in\N$, write $[n]:=\{1,2,\ldots,n\}$. We write $\C^n$ for the $n$-dimensional Hilbert space with standard (computational) basis $\set*{\ket{v}}{v\in [n]}$. Pure states on $\mathbb{C}^n$ are unit vectors $\ket{u} \in \mathbb{C}^n$, while the set of mixed states refers to the set of positive-semidefinite operators $\sigma \in \mathbb{M}_n(\mathbb{C})$ with $\Tr(\sigma)=1$. The space of $n$ qubits refers to the Hilbert space $\mathbb{C}^{\lbrace 0 ,1 \rbrace^{n}}$, with computational basis $\ket{v}$ for strings $v \in \lbrace 0 , 1 \rbrace^n$. We denote the set of all bitstrings $\{0,1\}^\ast:=\bigcup_{n=0}^\infty\{0,1\}^n$. Given some $x\in\{0,1\}^\ast$, write $|x|\in\N$ for its length. We use $\ttt{poly}=\set*{f:\N\rightarrow\R}{\exists\,k,N\geq 0.\;f(n)\leq n^k\,\forall\,n\geq N}$, to denote the set of polynomially bounded functions.

\paragraph{Complexity Theory}
A \emph{language} is a subset $L \subseteq \{0,1\}^\ast$, where elements $x \in L$ are called \emph{yes-instances} and elements $x \notin L$ are \emph{no-instances}. A \emph{promise problem} is a pair of disjoint subsets $(Y, N) \subseteq \{0,1\}^\ast$, with inputs promised to lie in $Y \cup N$. A \emph{complexity class} is a collection of languages or promise problems. A function $f:\{0,1\}^\ast \rightarrow \{0,1\}^\ast$ is \emph{polynomial-time computable} if there exists a Turing machine that outputs $f(x)$ in time polynomial in $|x|$. A language $L_1$ is \emph{polynomial-time Karp reducible} to a language $L_2$ if there exists a polynomial-time computable function $f$ such that $x \in L_1$ if and only if $f(x) \in L_2$. For promise problems $(Y_1, N_1)$ and $(Y_2, N_2)$, the reduction must satisfy $f(Y_1) \subseteq Y_2$ and $f(N_1) \subseteq N_2$. A language or promise problem $L$ is \emph{$\mathsf{C}$-hard} if every problem in $\mathsf{C}$ reduces to $L$, and \emph{$\mathsf{C}$-complete} if $L \in \mathsf{C}$ as well.

\paragraph{Quantum Circuits}

The inputs for the large and small overlap problems of \cref{thm:Main} are presented as quantum circuits. We view quantum circuits as being composed of gates from a fixed universal gate set which includes Hadamard, Toffoli, and NOT gates, together with qubit preparations in the state~$\ket{0}$ and partial trace operations. We also assume a canonical form: qubit preparations first, followed by unitary gates, then taking partial traces. Given a circuit~$C$, we write $\Phi_C$ for the channel it describes, and we define the size~$|C|$ as the total number of input qubits, output qubits, and gates. Circuit descriptions are encoded as bitstrings of length polynomial in~$|C|$.

\paragraph{The SWAP test}

The SWAP test~\cite{BCWdW01} is a simple quantum subroutine for estimating the Hilbert–Schmidt overlap $\Tr(\rho \sigma)$ between two quantum states. It operates by applying a controlled-SWAP gate to $\rho \otimes \sigma$, controlled by an ancilla in the state~$\ket{+}$, and measuring the ancilla in the Hadamard basis. The probability of outcome~$0$ (accepting) is $\frac{1}{2} + \frac{1}{2} \Tr(\rho \sigma)$. In particular, when $\rho = \Phi(\ketbra{u})$ and $\sigma = \Phi(\ketbra{v})$, this provides a sample-efficient method to estimate $\Tr(\Phi(\ketbra{u})\,\Phi(\ketbra{v}))$.

\paragraph{QCMA}

There are several natural quantum analogues of the complexity class $\mathrm{NP}$. The most well-known is $\mathrm{QMA}$ (which stands for Quantum Merlin Arthur) and denotes the set of promise languages which can be decided by a polynomial time quantum verifier who is given a quantum proof. The class $\mathrm{QCMA}$, introduced by Aharonov and Naveh, is the potentially smaller class in which a polynomial time quantum verifier is given a classical proof \cite{AN02arxiv}. Below we present a definition equivalent to that given in \cite{AN02arxiv} for this class.

\begin{definition}
    Let $c,s:\N\rightarrow[0,1]$. A promise problem $Y,N\subseteq\{0,1\}^\ast$ is in $\mathrm{QCMA}_{c,s}$ if there exist polynomials $p,q:\N\rightarrow\N$ and a Turing machine $V$ with one input tape and one output tape such that
    \begin{itemize}
        \item For all $x\in\{0,1\}^\ast$, $V$ halts on input $x$ in $q(|x|)$ steps and outputs the description of a quantum circuit acting on $p(|x|)$ qubits and which outputs a single qubit.

        \item For all $x\in Y$, there exists a classical string $y\in\{0,1\}^{p(|x|)}$ such that
        \[
        \braket{1}{\Phi_{V(x)}(\ketbra{y})}{1} \geq c(|x|).
        \]

        \item For all $x\in N$ and all $y\in\{0,1\}^{p(|x|)}$, 
        \[
        \braket{1}{\Phi_{V(x)}(\ketbra{y})}{1} \leq s(|x|).
        \]
    \end{itemize}
\end{definition}
Unlike the class $\mathrm{QMA}$, $\mathrm{QCMA}$ is known to possess perfect completeness. In particular  $\mathrm{QCMA} = \mathrm{QCMA}_{1, \epsilon}$, for some negligible function $\epsilon: \mathbb{N} \rightarrow [0,1]$, \cite{JKNN12}. This result requires circuits to be written using a suitable gate set. Since the inputs for our promise problems are themselves descriptions of quantum circuits, we require this description to also use a suitable gate set such as Hadamard, Toffoli, and NOT gates.

\section{Characterization and QCMA-Completeness}
Our approach begins with matrix-analytic arguments in \Cref{subsec:technical} establishing \Cref{thm:Technical}, which characterizes the optima in the large and small overlap problems. We then use this result in \Cref{subsec:Hardness} to prove $\mathrm{QCMA}$-completeness.

\subsection{Technical Results}\label{subsec:technical}

We will need the following generalized versions of the Cauchy–Schwarz and AM--GM inequalities. \Cref{lem:S-CS} follows from a well known proof of the Cauchy–Schwarz inequality for vectors over $\mathbb{R}^n$, and \Cref{lem:AGM} is also well known. We provide a proof of each for convenience.

\begin{lemma}\label{lem:S-CS}
    Let $\sigma_1, \dots, \sigma_k \in \mathbb{M}_n(\mathbb{C})$ be a collection of matrices, and let $\alpha_1, \dots, \alpha_k, \beta_1, \dots, \beta_k \in \mathbb{C}$. Then
    \begin{align*}
     \Tr\left( \sum_i |\alpha_i|^2 \sigma_i\sum_j |\beta_j|^2 \sigma_j \right) - \Tr  \left( \sum_i |\alpha_i||\beta_i| \sigma_i\right)^2  = \frac{1}{2} \Tr\left( \sum_{i,j} \left( |\alpha_i||\beta_j| - |\alpha_j||\beta_i| \right)^2 \sigma_i \sigma_j\right).
    \end{align*}
\end{lemma}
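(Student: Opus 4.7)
The plan is to prove the identity by expanding both sides as double sums indexed by $(i,j)$, with each term weighted by $\Tr(\sigma_i \sigma_j)$, and then matching them term by term using the symmetry of the trace.

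First I would set $a_i := |\alpha_i|$ and $b_i := |\beta_i|$ to reduce visual clutter; since only absolute values appear, the $\alpha_i, \beta_i$ may as well be non-negative reals. With this notation, the left-hand side expands as
\[
\sum_{i,j} a_i^2 b_j^2 \Tr(\sigma_i \sigma_j) \;-\; \sum_{i,j} a_i b_i a_j b_j \Tr(\sigma_i \sigma_j)
\;=\; \sum_{i,j} \bigl(a_i^2 b_j^2 - a_i a_j b_i b_j\bigr)\,\Tr(\sigma_i \sigma_j).
\]
For the right-hand side, expanding $(a_i b_j - a_j b_i)^2 = a_i^2 b_j^2 - 2 a_i a_j b_i b_j + a_j^2 b_i^2$ gives
\[
\frac{1}{2}\sum_{i,j} \bigl(a_i^2 b_j^2 + a_j^2 b_i^2 - 2 a_i a_j b_i b_j\bigr)\,\Tr(\sigma_i \sigma_j).
\]

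Next I would invoke the symmetry $\Tr(\sigma_i \sigma_j) = \Tr(\sigma_j \sigma_i)$, which means that relabelling $i \leftrightarrow j$ in a double sum leaves the trace factor untouched. Applying this to the $a_j^2 b_i^2$ contribution shows that $\sum_{i,j} a_j^2 b_i^2 \Tr(\sigma_i \sigma_j) = \sum_{i,j} a_i^2 b_j^2 \Tr(\sigma_i \sigma_j)$, so the right-hand side collapses to
\[
\sum_{i,j} \bigl(a_i^2 b_j^2 - a_i a_j b_i b_j\bigr)\,\Tr(\sigma_i \sigma_j),
\]
matching the left-hand side term by term. The main obstacle is essentially just clean bookkeeping of the symmetrization step — there is no real analytic content beyond the observation that the quadratic form $(x,y) \mapsto \Tr(X Y)$ on matrices is symmetric. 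No assumption on $\sigma_i$ (e.g. positivity or Hermiticity) is used; the identity is purely algebraic.
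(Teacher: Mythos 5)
Your proposal is correct and follows essentially the same route as the paper's proof: expand the square $(a_i b_j - a_j b_i)^2$, use linearity of the trace together with the relabelling $i \leftrightarrow j$ (valid under the trace since $\Tr(\sigma_i\sigma_j)=\Tr(\sigma_j\sigma_i)$) to merge the two squared terms, and collect. The only cosmetic difference is that you expand both sides and meet in the middle, while the paper transforms the right-hand side directly into the left-hand side.
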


\begin{proof}
    \begin{align*}
        &\frac{1}{2} \Tr\left( \sum_{i,j} \left( |\alpha_i||\beta_j| - |\alpha_j||\beta_i| \right)^2 \sigma_i \sigma_j\right)\\
        = &\frac{1}{2} \Tr\left( \sum_{i,j} |\alpha_i|^2 |\beta_j|^2 \sigma_i \sigma_j + \sum_{i,j} |\alpha_j|^2 |\beta_i|^2 \sigma_i \sigma_j - 2 \sum_{i,j} |\alpha_i| |\beta_j| |\alpha_j||\beta_i| \sigma_i \sigma_j\right)\\ 
        = &\frac{1}{2} \Tr\left( 2\sum_{i,j} |\alpha_i|^2 |\beta_j|^2 \sigma_i \sigma_j - 2 \sum_{i,j} |\alpha_i| |\beta_j| |\alpha_j||\beta_i| \sigma_i \sigma_j\right)\\        
        = &\Tr\left(\sum_{i,j} |\alpha_i|^2 |\beta_j|^2 \sigma_i \sigma_j\right) - \Tr\left(\sum_{i,j} |\alpha_i| |\beta_j| |\alpha_j||\beta_i| \sigma_i \sigma_j\right)\\
        = &\Tr\left( \sum_i |\alpha_i|^2 \sigma_i\sum_j |\beta_j|^2 \sigma_j \right) - \Tr  \left( \sum_i |\alpha_i||\beta_i| \sigma_i\right)^2
    \end{align*}
\end{proof}

\begin{lemma}\label{lem:AGM}
    For any self-adjoint matrices $A,B \in \mathbb{M}_n(\mathbb{C})$ we have the inequality
\[
\Tr(AB) \leq \frac{1}{2}(\Tr(A^2) + \Tr(B^2))
\]
\end{lemma}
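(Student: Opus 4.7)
The plan is to exploit the fact that $(A-B)$ is self-adjoint whenever $A$ and $B$ are, and so $(A-B)^2$ is positive semidefinite and has non-negative trace. This gives an immediate one-line proof of the inequality.

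Concretely, I would first note that for self-adjoint $A,B$, the quantity $\Tr(AB)$ is real: since $(AB)^\ast = B^\ast A^\ast = BA$, we have $\overline{\Tr(AB)} = \Tr((AB)^\ast) = \Tr(BA) = \Tr(AB)$ by cyclicity of the trace. Next, I would observe that $A-B$ is self-adjoint, hence $(A-B)^2 = (A-B)^\ast (A-B)$ is positive semidefinite, and therefore $\Tr((A-B)^2) \geq 0$.

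The main step is then to expand this expression:
\[
0 \;\leq\; \Tr\bigl((A-B)^2\bigr) \;=\; \Tr(A^2) - \Tr(AB) - \Tr(BA) + \Tr(B^2) \;=\; \Tr(A^2) + \Tr(B^2) - 2\Tr(AB),
\]
where the last equality again uses cyclicity of the trace. Rearranging yields the claimed bound.

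There is no real obstacle here; the only subtlety worth flagging explicitly is that self-adjointness of $A$ and $B$ is used both to ensure $\Tr(AB)\in\mathbb{R}$ (so that the scalar inequality makes sense) and to guarantee that $(A-B)^2$ is positive semidefinite. The statement fails without self-adjointness: for instance, if $A$ were skew-adjoint, $\Tr(A^2)$ could be negative while $\Tr(AB)$ remains unconstrained by the right-hand side.
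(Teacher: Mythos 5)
Your proof is correct and is essentially the paper's own one-line argument: both rest on expanding $\Tr\bigl((A-B)^2\bigr) \geq 0$, with your version simply spelling out why $(A-B)^2$ is positive semidefinite and why $\Tr(AB)$ is real. No issues.
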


\begin{proof}
    Simply note $\Tr(A^2) + \Tr(B^2) -2\Tr(AB) = \Tr \left( (A-B)^2 \right) \geq 0.$
\end{proof}

\begin{theorem}\label{thm:Min Overlap Bound}
    Let $\ket{u} = \sum_i \alpha_i \ket{i}$ and $ \ket{v}= \sum_j \beta_j \ket{j}$, be pure states in $\mathbb{C}^n$, and  $\sigma_1, \dots, \sigma_n $ be states in $\mathbb{M}_d(\mathbb{C})$. Then 
    \[ \sum_{i,j} |\alpha_i|^2 |\beta_j|^2 \Tr(\sigma_i \sigma_j) \geq \theta (1-\Delta^2) \]
where
\[
\theta := \min_{i \neq j} \Tr(\sigma_i \sigma_j) 
\]
and
\[\Delta := \max_{j} \frac{|\alpha_j|\,|\beta_j|\,| \braket{u}{v}|}
{\sum_{i\neq j} |\alpha_i|\,|\beta_i| + |\braket{u}{v}|} \, \text{ when } | \braket{u}{v} | \ne 0 \text{ and } 0 \text{ otherwise}.
\]

\end{theorem}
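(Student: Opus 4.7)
The plan is to apply Lemma~\ref{lem:S-CS} with $\alpha_i, \beta_i$ the amplitudes of $\ket{u}, \ket{v}$, which decomposes
\[
\mathrm{LHS} = \Tr(T^2) + \tfrac12 \sum_{i,j} \bigl(|\alpha_i||\beta_j| - |\alpha_j||\beta_i|\bigr)^2 \Tr(\sigma_i \sigma_j),
\]
where $T := \sum_i |\alpha_i||\beta_i|\sigma_i$. The diagonal of the second sum vanishes, so using $\Tr(\sigma_i \sigma_j) \geq \theta$ for $i \neq j$ yields $\mathrm{LHS} \geq \Tr(T^2) + \theta(1 - s^2)$ with $s := \sum_i |\alpha_i||\beta_i|$.

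Next, I would bound $\Tr(T^2)$ from below by splitting on purities. A Hilbert--Schmidt Cauchy--Schwarz argument shows that at most one index $i_0$ can satisfy $\Tr(\sigma_{i_0}^2) < \theta$: two such indices would give $\Tr(\sigma_{i_0}\sigma_{i_0'}) \leq \sqrt{\Tr(\sigma_{i_0}^2)\Tr(\sigma_{i_0'}^2)} < \theta$, contradicting the definition of $\theta$. If no such index exists, a direct estimate yields $\mathrm{LHS} \geq \theta$, already dominating the target. Otherwise, setting $g := \Tr(\sigma_{i_0}^2)$, $t := |\alpha_{i_0}||\beta_{i_0}|$, and $\rho := s - t$, I would apply Cauchy--Schwarz in the form $\Tr(T\sigma_{i_0})^2 \leq g\Tr(T^2)$ together with the estimate $\Tr(T\sigma_{i_0}) \geq tg + \theta\rho$ (from $\Tr(\sigma_j\sigma_{i_0}) \geq \theta$ for $j \neq i_0$) to obtain $\Tr(T^2) \geq (tg + \theta\rho)^2/g$.

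The remainder is algebra. Combining the above estimates, the target $\mathrm{LHS} \geq \theta(1-\Delta^2)$ reduces---via the identity $(tg+\theta\rho)^2 - g\theta s^2 = (g - \theta)(gt^2 - \theta\rho^2)$---to $g\theta\Delta^2 \geq (\theta - g)(gt^2 - \theta\rho^2)$. The case $gt^2 \leq \theta\rho^2$ is immediate; otherwise $gt^2 > \theta\rho^2$ combined with $g < \theta$ forces $t > \rho$, whence $p_{i_0}$ is the unique maximum of the $p_j$. The main obstacle will be lower-bounding $\Delta$ in this regime, which I expect to handle via the reverse triangle inequality on $|\braket{u}{v}| = |\sum_i \alpha_i^*\beta_i|$, yielding $|\braket{u}{v}| \geq 2p_{i_0} - s = t - \rho$; substituting into the $j = i_0$ term of the maximum defining $\Delta$ then gives $\Delta \geq t - \rho$. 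An expansion shows $g\theta(t-\rho)^2 - (\theta - g)(gt^2 - \theta\rho^2) = (gt - \theta\rho)^2 \geq 0$, closing the argument.
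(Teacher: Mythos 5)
Your argument is correct, and it is a genuinely different execution from the paper's proof, though the two share a skeleton: both start from the observation that Hilbert--Schmidt Cauchy--Schwarz permits at most one index $i_0$ with $\Tr(\sigma_{i_0}^2)<\theta$, both invoke Lemma~\ref{lem:S-CS}, and both ultimately reach $\Delta$ through the $j=i_0$ term of its defining maximum. The difference lies in how the problematic purity term is neutralized. The paper defines the parameter $r$ by $a_1b_1=r\bigl(\sum_{i>1}a_ib_i+|\braket{u}{v}|\bigr)$ at the outset, applies Lemma~\ref{lem:S-CS} only to the block of indices $i,j>1$, and eliminates $\Tr(\sigma_1^2)$ by pairing it with the resulting perfect square via AM--GM (Lemma~\ref{lem:AGM}); the value of the purity never enters, and the whole proof is a single chain of inequalities ending in $\theta(1-r^2|\braket{u}{v}|^2)$. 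You instead apply Lemma~\ref{lem:S-CS} globally, reducing everything to a lower bound on $\Tr(T^2)$ with $T=\sum_i|\alpha_i||\beta_i|\sigma_i$, and control $\Tr(T^2)$ by Cauchy--Schwarz against $\sigma_{i_0}$, carrying $g=\Tr(\sigma_{i_0}^2)$ as an explicit parameter; the endgame is then pure scalar algebra, and both of your identities, $(tg+\theta\rho)^2-g\theta s^2=(g-\theta)(gt^2-\theta\rho^2)$ and $g\theta(t-\rho)^2-(\theta-g)(gt^2-\theta\rho^2)=(gt-\theta\rho)^2$, check out. Your route is more modular and in fact yields a slightly stronger intermediate bound: when no low-purity index exists you get $\theta$ outright, and otherwise you get $\theta\bigl(1-\max\{0,\,t-\rho\}^2\bigr)$, which dominates the paper's $\theta(1-r^2|\braket{u}{v}|^2)$ because $t-\rho\le t|\braket{u}{v}|/(\rho+|\braket{u}{v}|)=r|\braket{u}{v}|$. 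What the paper's version buys in exchange is a single uninterrupted estimate with no case analysis and no dependence on $g$.

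Two small points to tighten in a final write-up. First, you use $p_j$ without defining it; you mean $p_j=|\alpha_j||\beta_j|$, so $p_{i_0}=t$ (and this sentence about $p_{i_0}$ being the unique maximum is not actually needed). Second, the step from $|\braket{u}{v}|\ge t-\rho$ to $\Delta\ge t-\rho$ silently uses that $w\mapsto tw/(\rho+w)$ is nondecreasing in $w$ --- equivalently, the direct verification $tw-(t-\rho)(\rho+w)=\rho\bigl(w-(t-\rho)\bigr)\ge 0$ --- together with the observation that in this case $|\braket{u}{v}|\ge t-\rho>0$, so $\Delta$ is given by the maximum formula rather than the ``otherwise $0$'' clause. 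Both are one-line fixes.
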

 Before proceeding to the proof we note \Cref{eq:Technical min} follows as $\Delta=0$ whenever $\ket{u}$ and $\ket{v}$ are orthogonal.

\begin{proof}
First, to simplify the notation, let $a_i = |\alpha_i|$ and $b_i = |\beta_i|$. Note that it is not possible to have $\Tr(\sigma^2_i) < \theta $ and $\Tr(\sigma^2_j) < \theta $ for distinct $i \neq j$, since otherwise by the Cauchy--Schwarz inequality we have 
$$\Tr(\sigma_i \sigma_j) < \theta.$$ 
Without loss of generality, assume that $\Tr(\sigma_i \sigma_j) \geq \theta$ for $(i,j) \neq (1,1)$. By the triangle inequality, for any two states $\ket{u}$ and $\ket{v}$ we have 
$$a_1b_1 \leq \sum_{i>1} a_i b_i + |\braket{u}{v}|.$$ 
Hence there exists $ 0 \leq r \leq 1$ such that 
$$a_1b_1 = r \left(\sum_{i>1} a_i b_i + | \braket{u}{v}|\right).$$ 
Let $s$ denote the overall value
$$ s=\sum_{i,j} a_i^2 b_j^2 \Tr(\sigma_i \sigma_j).$$
First, we will split this sum into several parts in order to isolate and eliminate the problematic $\Tr(\sigma_1^2)$ term.
\begin{align*}
s &= a_1^2 b_1^2\Tr(\sigma_1^2) + a_1^2 \sum_{i>1} b_i^2 \Tr(\sigma_1\sigma_i) + b_1^2 \sum_{i>1} a_i^2 \Tr(\sigma_1\sigma_i) + \sum_{i,j>1} a_i^2 b_j^2 \Tr(\sigma_i\sigma_j) \\
&\geq a_1^2b_1^2 \Tr(\sigma^2_1) +  r^2\sum_{i,j >1} a_i^2 b_j^2 \Tr (\sigma_i \sigma_j ) 
+ \theta \left(a_1^2 \sum_{i >1} b_i^2  + b_1^2 \sum_{i >1} a_i^2  + (1-r^2)\sum_{i,j >1}a_i^2 b_j^2 \right)\\
&= \Tr(a_1^2b_1^2\sigma^2_1) + \Tr \left( r \sum_{i >1} a_i b_i \sigma_i \right)^2 +  r^2 \left( \Tr\left(\sum_{i,j >1} a_i^2 b_j^2 \sigma_i \sigma_j\right)  - \Tr \left( \sum_{i >1} a_i b_i \sigma_i \right)^2 \right) \\
&+ \theta \left(a_1^2 \sum_{i >1} b_i^2  + b_1^2 \sum_{i >1} a_i^2  + (1-r^2)\sum_{i,j >1}a_i^2 b_j^2 \right)
\end{align*}
Next, by applying \cref{lem:S-CS} and \cref{lem:AGM} we get,
\begin{align*}
    s &\geq  2 \Tr\left( r a_1b_1 \sigma_1 \sum_{i>1} a_ib_i \sigma_i  \right) +\frac{r^2}{2} \Tr \left( \sum_{i,j>1} (a_ib_j -a_jb_i)^2 \sigma_i \sigma_j \right) \\
    &+ \theta \left(a_1^2 \sum_{i >1} b_i^2  + b_1^2 \sum_{i >1} a_i^2  + (1-r^2)\sum_{i,j >1}a_i^2 b_j^2 \right)
\end{align*}
We have now eliminated the $\Tr(\sigma_1^2)$ term, so we can factor $\theta$ out from everything:
\begin{align*}
    s &\geq \theta \Big( 2r a_1b_1\sum_{i>1}a_i b_i \Big) \\
    &+ \theta\Big( \frac{r^{2}}{2} \sum_{i,j >1}(a_ib_j -a_jb_i)^2 + a_1^2 \sum_{i >1} b_i^2  + b_1^2 \sum_{i >1} a_i^2  + (1-r^2)\sum_{i,j >1}a_i^2 b_j^2 \Big) \\
    &= \theta \Big( r a_1b_1\Big(\sum_{i>1}a_i b_i + |\braket{u}{v}|\Big) - r a_1b_1|\braket{u}{v}| + r a_1b_1\sum_{i>1}a_i b_i \Big)\\ 
    &+ \theta\Big( \frac{r^{2}}{2} \sum_{i,j >1}(a_ib_j -a_jb_i)^2 + a_1^2 \sum_{i >1} b_i^2  + b_1^2 \sum_{i >1} a_i^2  + (1-r^2)\sum_{i,j >1}a_i^2 b_j^2 \Big)
\end{align*}
Using the definition of $r$, we now substitute in an (additional) $a_1 b_1$ factor in the first term and substitute out the $a_1 b_1$ factor at the end of the line.
\begin{align*}
    s &\geq \theta \Big( a_1^2b_1^2 - r a_1b_1|\braket{u}{v}| + r^2\Big( \sum_{i>1}a_i b_i + |\braket{u}{v}|\Big)\sum_{i>1}a_i b_i \Big)\\
    &+ \theta\Big( \frac{r^{2}}{2} \sum_{i,j >1}(a_ib_j -a_jb_i)^2 + a_1^2 \sum_{i >1} b_i^2  + b_1^2 \sum_{i >1} a_i^2  + (1-r^2)\sum_{i,j >1}a_i^2 b_j^2 \Big)\\
    &= \theta \Big( a_1^2b_1^2 - r a_1b_1|\braket{u}{v}| + r^2\Big( \sum_{i>1}a_i b_i \Big)^2 + r^2 |\braket{u}{v}| \sum_{i>1}a_i b_i\\
    &+ \frac{r^{2}}{2} \sum_{i,j >1}(a_ib_j -a_jb_i)^2 + a_1^2 \sum_{i >1} b_i^2  + b_1^2 \sum_{i >1} a_i^2  + (1-r^2)\sum_{i,j >1}a_i^2 b_j^2 \Big)
\end{align*}
By \cref{lem:S-CS},
\[r^2\left( \sum_{i>1}a_i b_i \right)^2 - r^2\sum_{i,j >1}a_i^2 b_j^2 = -\frac{r^{2}}{2} \sum_{i,j >1}(a_ib_j -a_jb_i)^2,\]
which causes several terms to cancel out, allowing us to conclude
\begin{align*}
    s &\geq \theta \Big( a_1^2b_1^2 + a_1^2 \sum_{i >1} b_i^2  + b_1^2 \sum_{i >1} a_i^2  + \sum_{i,j >1}a_i^2 b_j^2 
    - r |\braket{u}{v}| \Big(a_1b_1 - r\sum_{i>1}a_i b_i)\Big) \Big)\\
    &= \theta(1 - r^2 |\braket{u}{v}|^2)\\ 
    & \geq \theta(1-\Delta^2).
\end{align*}
\end{proof}

We next proceed to the proof of \Cref{eq:Technical max} of \cref{thm:Technical}. We first record the following well known result. See for example \cite[p. 351]{Vea21}.
\begin{lemma}
\label{lem:convex}
Let $M \in M_n(\mathbb{R})$ be a real PSD matrix. Then $g(x) := x^{\top}Mx$ is a convex function on $\mathbb{R}^n$.
\end{lemma}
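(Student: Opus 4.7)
The plan is to verify the convexity inequality $g(\lambda x + (1-\lambda)y) \leq \lambda\, g(x) + (1-\lambda)\, g(y)$ by direct expansion. Since $M$ is a real PSD matrix it is in particular symmetric, so the cross terms satisfy $x^\top M y = y^\top M x$. Expanding $(\lambda x + (1-\lambda)y)^\top M (\lambda x + (1-\lambda)y)$ and collecting like terms, the difference
\[
\lambda\, g(x) + (1-\lambda)\, g(y) - g(\lambda x + (1-\lambda) y)
\]
should telescope to $\lambda(1-\lambda)(x-y)^\top M (x-y)$, which is nonnegative because $M \succeq 0$. This gives the claim.

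An essentially equivalent, perhaps cleaner alternative is to factor $M = A^\top A$ (possible since $M$ is real PSD, e.g.\ via Cholesky or a symmetric square root). Then $g(x) = \|Ax\|_2^2$, which is the composition of a linear map with the squared Euclidean norm; both operations preserve convexity, so $g$ is convex. A third option is the second-order condition: the Hessian of $g$ is $M + M^\top = 2M$, which is PSD by assumption.

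There is no real obstacle here, as this is a standard textbook fact cited for later use. The only point of care is to invoke the symmetry of $M$ (guaranteed by the real PSD hypothesis) when combining the two cross terms, or equivalently to exploit symmetry in producing the factor $A$. I would present the one-line direct-expansion argument, since it avoids appealing to factorization lemmas or multivariate calculus machinery that are not otherwise used in the paper.
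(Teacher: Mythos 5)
Your proof is correct, but it differs from the paper in an important structural way: the paper does not prove this lemma at all. It records it as a well-known fact and simply cites a textbook (\cite[p.~351]{Vea21}). Your direct-expansion argument therefore supplies a self-contained justification that the paper omits. The computation does check out: expanding and collecting terms gives exactly
\[
\lambda\, g(x) + (1-\lambda)\, g(y) - g\bigl(\lambda x + (1-\lambda) y\bigr)
= \lambda(1-\lambda)\,(x-y)^{\top} M (x-y) \;\ge\; 0,
\]
where nonnegativity is precisely the PSD hypothesis. One refinement worth noting: symmetry of $M$ is not actually needed for this identity, since the two cross terms can be carried along as $x^{\top} M y + y^{\top} M x$ without ever being merged, and $(x-y)^{\top}M(x-y)\ge 0$ is all that is used; whether ``real PSD'' entails symmetric is a matter of convention, so it is cleaner not to rely on it (in the paper's application $M_{ij} = \mathrm{Tr}(\sigma_i\sigma_j)$ is symmetric anyway). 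Your two alternatives, factoring $M = A^{\top}A$ so that $g(x) = \|Ax\|_2^2$, or checking that the Hessian $2M$ is PSD, are equally valid, though these do genuinely require symmetry. Your choice to present the one-line expansion is sensible, as it keeps the argument elementary and matches the algebraic flavor of the surrounding proofs (e.g.\ \Cref{lem:S-CS} and \Cref{lem:AGM}), which are likewise proved by expanding a manifestly nonnegative square.
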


\begin{theorem}\label{thm:Max overlap}
    Let $\sigma_1, \dots, \sigma_n \in \mathbb{M}_d(\mathbb{C})$ be arbitrary quantum states. Then for any collection of $k \leq n$ of pairwise orthogonal pure states $\ket{u_1}, \dots, \ket{u_k} \in \mathbb{C}^n$, the following bound holds:
    \begin{equation*}
\frac{1}{k(k-1)} \sum_{r \ne s} \Tr\left( \left( \sum_i |\braket{i}{u_r}|^2 \sigma_i \right) \left( \sum_j |\braket{j}{u_s}|^2 \sigma_j \right) \right)
\;\;\leq\;\;
\max_{\substack{T \subseteq [n] \\ |T| = k}} \frac{1}{k^2} \Tr\left( \left( \sum_{i \in T} \sigma_i \right)^2 \right).
\end{equation*}
\end{theorem}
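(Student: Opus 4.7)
The plan is to reduce everything to a single quadratic form and then apply convexity twice. Define $g(x) := \Tr\bigl((\sum_i x_i \sigma_i)^2\bigr) = x^{\top} M x$, where $M_{ij} := \Tr(\sigma_i \sigma_j)$. Since $M$ is the Gram matrix of $\sigma_1, \dots, \sigma_n$ under the Hilbert--Schmidt inner product it is PSD, so \Cref{lem:convex} gives that $g$ is convex on $\mathbb{R}^n$. Set $p_{r,i} := |\braket{i}{u_r}|^2$ and $q_i := \sum_r p_{r,i}$. Each $p_r$ is a probability distribution on $[n]$, and the fact that $\ket{u_1}, \dots, \ket{u_k}$ form an orthonormal set combined with Bessel's inequality forces $0 \le q_i \le 1$ for every $i$, while $\sum_i q_i = k$. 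Interchanging the sums over $(r,s)$ and $(i,j)$ and splitting off the $r=s$ diagonal yields the identity
\[
\frac{1}{k(k-1)} \sum_{r \ne s} \Tr\!\left(\! \left( \sum_i p_{r,i}\,\sigma_i \right)\!\!\left( \sum_j p_{s,j}\,\sigma_j \right) \!\right)
= \frac{1}{k(k-1)}\Bigl[\, g(q) - \sum_r g(p_r) \,\Bigr].
\]

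Next I would apply convexity twice. First, $q/k = \frac{1}{k}\sum_r p_r$ is a convex combination of the $p_r$, so convexity of $g$ gives $g(q/k) \le \frac{1}{k}\sum_r g(p_r)$, i.e.\ $\sum_r g(p_r) \ge g(q)/k$. Plugging this into the identity bounds the left-hand side of the theorem by $\frac{1}{k(k-1)}\bigl(1 - \tfrac{1}{k}\bigr) g(q) = \frac{1}{k^2}\, g(q)$.

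To finish, observe that $q$ lies in the hypersimplex $P_{n,k} := \{x \in \mathbb{R}^n : 0 \le x_i \le 1,\ \sum_i x_i = k\}$. A short linear-programming argument (or a direct exchange argument on any coordinate strictly between $0$ and $1$) shows that the vertices of $P_{n,k}$ are exactly the indicator vectors $\mathbf{1}_T$ for $T \subseteq [n]$ with $|T|=k$. Since $g$ is convex and $P_{n,k}$ is a compact polytope, its maximum on $P_{n,k}$ is attained at a vertex, giving
\[
g(q) \;\le\; \max_{|T|=k} g(\mathbf{1}_T) \;=\; \max_{|T|=k} \Tr\!\left(\Bigl(\sum_{i \in T} \sigma_i\Bigr)^{\!2}\right),
\]
which together with the previous bound yields the claimed inequality.

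The main subtle step is really identifying the correct polytope: once one notices that Bessel's inequality pins $q$ inside the hypersimplex, both convexity applications collapse into the single observation that $g$ is a PSD quadratic form, so nothing beyond \Cref{lem:convex} and standard vertex enumeration is needed. The computation is otherwise routine bookkeeping.
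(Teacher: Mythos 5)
Your proof is correct, and its skeleton matches the paper's: both arguments reduce the left-hand side to $\frac{1}{k^2}\,g(q)$ for the PSD quadratic form $g(x)=x^{\top}Mx$ with $M_{ij}=\Tr(\sigma_i\sigma_j)$, and then maximize the convex function $g$ over the polytope $\{x\in\mathbb{R}^n : 0\le x_i\le 1,\ \sum_i x_i=k\}$, whose vertices are the indicator vectors of $k$-subsets. The differences lie in the ingredients used at the two key steps, and in both cases yours are more elementary. First, where you derive $\sum_r g(p_r)\ge g(q)/k$ from Jensen's inequality plus degree-$2$ homogeneity of $g$, the paper uses the variance inequality $\sum_r \Tr\bigl((A_r-\bar{A})^2\bigr)\ge 0$ with $A_r=\sum_i|\braket{i}{u_r}|^2\sigma_i$ and $\bar{A}=\frac{1}{k}\sum_r A_r$; these are the same inequality in different clothing, so this difference is cosmetic. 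Second, and more substantively, the paper identifies $q$ as the diagonal of the rank-$k$ projection $P=\sum_r\ketbra{u_r}{u_r}$ and invokes the Schur--Horn theorem both to place $q$ in the polytope and to identify its vertices; you observe that only the containment direction is needed, and that this direction is exactly Bessel's inequality (giving $q_i\le 1$) together with normalization (giving $\sum_i q_i=k$), after which the vertex description of the hypersimplex follows from a standard exchange argument. This removes the one nontrivial external citation from the argument (both proofs still need Bauer's maximum principle, or the elementary fact that a convex function on a compact polytope attains its maximum at a vertex, which is all \Cref{lem:convex} is used for). The only thing the paper's Schur--Horn route adds is the remark that the bound can be attained, but attainability is witnessed trivially by taking the $\ket{u_r}$ to be computational basis states indexed by the maximizing set $T$, so nothing is lost in your version.
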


Before proceeding to the proof we note that \Cref{eq:Technical max} follows by taking $k=2$.

\begin{proof}

Let $A_r := \sum_i |\langle i|u_r \rangle|^2 \sigma_i$, and denote the left-hand side of the inequality by
$$S := \frac{1}{k(k-1)} \sum_{r \ne s} \mathrm{Tr}(A_r A_s).$$

Let the average of the $A_r$'s be
$$\bar{A} := \frac{1}{k} \sum_{r=1}^k A_r,$$

and observe that
$$\sum_{r=1}^k \mathrm{Tr}((A_r-\bar{A})^2) \ge 0.$$

Expanding this expression and simplifying gives
$$\sum_r \mathrm{Tr}(A_r^2) - 2\mathrm{Tr}\left(\left(\sum_r A_r\right)\bar{A}\right) + k\mathrm{Tr}(\bar{A}^2) \ge 0.$$

Since $\sum_r A_r = k\bar{A}$, this simplifies further to
$$\sum_r \mathrm{Tr}(A_r^2) \ge k\mathrm{Tr}(\bar{A}^2). \ \ (*)$$

Now observe that
$$\sum_{r \ne s} \mathrm{Tr}(A_r A_s) = \mathrm{Tr}\left(\left(\sum_r A_r\right)^2\right) - \sum_r \mathrm{Tr}(A_r^2) = k^2\mathrm{Tr}(\bar{A}^2) - \sum_r \mathrm{Tr}(A_r^2).$$

Using (*), we see that
$$\sum_{r \ne s} \mathrm{Tr}(A_r A_s) \le k^2 \mathrm{Tr}(\bar{A}^2) - k\mathrm{Tr}(\bar{A}^2) = k(k-1)\mathrm{Tr}(\bar{A}^2).$$

Dividing by $k(k-1)$, we obtain
$$S \le \mathrm{Tr}(\bar{A}^2).$$

Write
$$\bar{A} = \frac{1}{k} \sum_r \sum_i |\langle i|u_r \rangle|^2 \sigma_i = \frac{1}{k} \sum_i \left(\sum_r |\langle i|u_r \rangle|^2 \right)\sigma_i.$$

Let $c_i := \sum_{r=1}^k |\langle i|u_r \rangle|^2$. Then $\bar{A} = \frac{1}{k} \sum_i c_i \sigma_i$, and our inequality becomes
$$S \le \mathrm{Tr}\left(\left(\frac{1}{k} \sum_i c_i \sigma_i\right)^2\right) = \frac{1}{k^2} \mathrm{Tr}\left(\left(\sum_i c_i \sigma_i\right)^2\right).$$

Let $c := (c_i)_i$. Then the vector $c$ is the diagonal of the matrix
$$P := \sum_{r=1}^k \ketbra{u_r}{ u_r}.$$

Observe that $P$ is a rank-$k$ orthogonal projection. By the Schur--Horn theorem \cite{Sch23,Horn54}, the set of all possible diagonal vectors $c$ for $P$ forms a convex polytope $\mathcal{C}_k$ whose vertices are precisely those vectors with exactly $k$ entries equal to $1$ and all remaining entries equal to $0$. The vertices correspond to choosing a subset $T \subseteq \{1,\dots,n\}$ of size $k$ and forming the vector $v_T := \sum_{i \in T} e_i$ where the $e_i$'s are the real standard basis vectors. 

Define the function \[g(c) := \mathrm{Tr}\left(\left(\sum_i c_i \sigma_i\right)^2\right)= \sum_{i,j} c_ic_j\mathrm{Tr}\left(\sum_{i,j}\sigma_i \sigma_j\right).\] We wish to maximize $g$ over $\mathcal{C}_k$. Define $M_{ij} := \mathrm{Tr}(\sigma_i \sigma_j)$ and $M := (M_{ij})_{i,j}$. Then $M$ is positive semidefinite and $g(c) = c^{\top}Mc$ is a quadratic form defined by a positive semidefinite matrix, hence is convex by Lemma \ref{lem:convex}.

By Bauer's maximum principle \cite{Bau58}, the maximum of $g$ must occur at a vertex of $\mathcal{C}_k$. Observe that $g(v_T) = \mathrm{Tr}\left(\left(\sum_{i \in T} \sigma_i\right)^2\right)$. The maximum is therefore $\max_{|T|=k} \mathrm{Tr}\left(\left(\sum_{i \in T} \sigma_i\right)^2\right)$ and can be attained. Overall, we obtain
$$S \le \frac{1}{k^2} g(c) \le \frac{1}{k^2} \max_{c \in \mathcal{C}_k} g(c) = \frac{1}{k^2} \max_{\substack{T \subseteq [n] \\ |T| = k}} \mathrm{Tr}\left(\left(\sum_{i \in T} \sigma_i\right)^2\right).$$
\end{proof}

\subsection{QCMA Completeness}\label{subsec:Hardness}

Now that we have obtained \cref{thm:Technical}, establishing $\mathrm{QCMA}$ completeness is relatively straightforward. For convenience, we restate the small and large overlap problems formally as promise problems.

\begin{definition}[Small and Large Overlap Problems]\label{defn:Overlap problems}
Let $\mathcal{C}_{\mathrm{C\text{-}Q}}$ denote the set of quantum circuits $C$ implementing classical–quantum channels $\Phi_C$, and let $c,s:\mathbb{N}\to[0,1]$.
\begin{enumerate}
    \item \textbf{Small Overlap:} The promise problem $\mathrm{SO}_{c,s}=(Y,N)$ where
    \[
    Y=\bigl\{\,C \in \mathcal{C}_{\mathrm{C\text{-}Q}}~|~\exists~\text{orthogonal }\ket{u},\ket{v}:
    \Tr(\Phi_C(\ketbra{u})\,\Phi_C(\ketbra{v})) \le 1-c\,\bigr\},
    \]
    and
    \[
    N=\bigl\{\,C \in \mathcal{C}_{\mathrm{C\text{-}Q}}~|~\forall~\text{orthogonal }\ket{u},\ket{v}:
    \Tr(\Phi_C(\ketbra{u})\,\Phi_C(\ketbra{v})) \ge 1-s\,\bigr\}.
    \]
    
    \item \textbf{Large Overlap:} The promise problem $\mathrm{LO}_{c,s}=(Y,N)$ where
    \[
    Y=\bigl\{\,C \in \mathcal{C}_{\mathrm{C\text{-}Q}}~|~\exists~\text{orthogonal }\ket{u},\ket{v}:
    \Tr(\Phi_C(\ketbra{u})\,\Phi_C(\ketbra{v})) \ge c\,\bigr\},
    \]
    and
    \[
    N=\bigl\{\,C \in \mathcal{C}_{\mathrm{C\text{-}Q}}~|~\forall~\text{orthogonal }\ket{u},\ket{v}:
    \Tr(\Phi_C(\ketbra{u})\,\Phi_C(\ketbra{v})) \le s\,\bigr\}.
    \]
\end{enumerate}
\end{definition}
We remark that if the input states are allowed to be mixed the resulting problems are equivalent to \cref{defn:Overlap problems}. We begin with the containment direction for both the small and large overlap problems, which both follow using the SWAP test, together with \Cref{thm:Technical}.

\begin{theorem}\label{thm:SO-containment}
 Let $c,s: \mathbb{N} \rightarrow [0,1]$.
The promise problem $\mathrm{SO}_{c,s}=(Y,N)$ is in $\mathrm{QCMA}_{c',s'}$ with $c'=\frac{c}{2} $ and $s' =\frac{s}{2}$.   
\end{theorem}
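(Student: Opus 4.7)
The plan is to build a QCMA verifier whose classical witness encodes a pair of distinct computational basis indices $(i,j)$ and whose acceptance probability is a rescaling of $1-\Tr(\sigma_i\sigma_j)$, where $\sigma_k := \Phi_C(\ketbra{k})$ are the states defining the C--Q channel $\Phi_C$. The whole argument rests on \eqref{eq:Technical min}: it guarantees that the infimum of $\Tr(\Phi_C(\ketbra{u})\Phi_C(\ketbra{v}))$ over orthogonal pure pairs is already attained (to within equality) by some pair of distinct computational basis states, which is exactly what collapses the quantum search over orthogonal pure pairs into a classical search over index pairs.

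First I would specify the verifier. On input $(C,y)$, with $y\in\{0,1\}^{2n}$ and $n$ the input arity of $C$, parse $y$ as $(i,j)\in\{0,1\}^n\times\{0,1\}^n$. If $i=j$, the verifier prepares a fixed ancilla in $\ket{0}$ and designates it as the output qubit, so the acceptance probability is $0$. If $i\neq j$, it instead prepares $\ket{i}\otimes\ket{j}$, applies an independent copy of $C$ to each half to produce $\sigma_i\otimes\sigma_j$, runs a SWAP test across the two output registers, and designates the Hadamard-processed ancilla of the SWAP test as the output qubit. By the SWAP test analysis recalled in the preliminaries, the probability of observing $\ket{1}$ on this output qubit equals $\tfrac{1}{2}(1-\Tr(\sigma_i\sigma_j))$. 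Since the construction uses only two copies of $C$ plus polynomial SWAP-test overhead, a polynomial-time Turing machine can emit the full circuit description from $(C,y)$.

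Completeness and soundness are then immediate. For completeness, if $C$ is a YES instance, then \eqref{eq:Technical min} applied to the defining states $\{\sigma_k\}$ produces distinct indices $i^*,j^*$ with $\Tr(\sigma_{i^*}\sigma_{j^*})\leq 1-c$, and on witness $(i^*,j^*)$ the verifier accepts with probability at least $\tfrac{1}{2}(1-(1-c))=c/2$. For soundness, the NO promise applied to the orthogonal pair $\ket{i},\ket{j}$ gives $\Tr(\sigma_i\sigma_j)\geq 1-s$ for every $i\neq j$, so the acceptance probability on any non-diagonal witness is at most $s/2$, while witnesses with $i=j$ are rejected by construction.

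The one genuinely design-level point, rather than a routine calculation, is the explicit rejection of diagonal ($i=j$) witnesses. The NO promise quantifies only over orthogonal $\ket{u},\ket{v}$ and so imposes no bound on $\Tr(\sigma_i^2)$; without a manual reject, a sufficiently mixed $\sigma_i$ could give a SWAP-test acceptance probability $\tfrac{1}{2}(1-\Tr(\sigma_i^2))$ approaching $\tfrac{1}{2}$ and break soundness. Once this bookkeeping is handled, the proof is essentially a transcription of \eqref{eq:Technical min} into the QCMA framework, with all of the mathematical content already carried by \Cref{thm:Technical}.
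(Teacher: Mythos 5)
Your proposal is correct and takes essentially the same approach as the paper: a classical witness encoding a pair $i \neq j$, a SWAP test on $\sigma_i \otimes \sigma_j$ with acceptance exactly when the test fails, and completeness/soundness read off directly from \cref{eq:Technical min}. Your explicit rejection of $i=j$ witnesses is a sound bookkeeping detail that the paper leaves implicit (it simply stipulates that the proof has $i \neq j$), and your observation that this is needed—since the NO promise places no bound on $\Tr(\sigma_i^2)$—is accurate.
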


\begin{proof}
Given $C \in \mathcal{C}_{\mathrm{C\text{-}Q}}$, let $\Phi_{C}: \mathbb{M}_n(\mathbb{C}) \rightarrow \mathbb{M}_d(\mathbb{C})$ be the corresponding classical-quantum channel, and let $\sigma_1, \dots, \sigma_n \in \mathbb{M}_d(\mathbb{C})$ where $\Phi(\ketbra{i}) =\sigma_i$ be the states which define its action.
The verifier expects a proof of the form $\ket{i,j} \in \mathbb{C}^n \otimes \mathbb{C}^n$, with $i \neq j$, and applies the swap test on $\sigma_i \otimes \sigma_j$. She accepts if and only if the swap test fails. If $C \in Y$ then by \Cref{eq:Technical min} there exists $i \neq j$ such that $\Tr(\sigma_i \sigma_j) \leq 1-c$. Thus the probability the verifier will accept is $\frac{1}{2} - \frac{1}{2} \Tr(\sigma_i \sigma_j) \geq \frac{c}{2}$. Conversely, if $C \in N$ then for all $i \neq j$ we have $\Tr(\sigma_i \sigma_j ) \geq 1-s$ and the verifier accepts with probability at most $\frac{s}{2}$.
\end{proof}

A similar approach applies to the large overlap problem, giving the following result.

\begin{theorem}
    Let $c,s: \mathbb{N} \rightarrow [0,1]$.
The promise problem $\mathrm{LO}_{c,s}=(Y,N)$ is in $\mathrm{QCMA}_{c',s'}$ with $c'=\frac{1+c}{2} $ and $s' =\frac{1+s}{2}$.
\end{theorem}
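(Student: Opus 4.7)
The plan is to mirror the proof of \Cref{thm:SO-containment}, using the maximum-side characterization \Cref{eq:Technical max} in place of the minimum-side one. The verifier would take a classical witness specifying a pair of indices $i \neq j \in [n]$, prepare the two orthogonal states $\ket{\pm_{ij}} := \tfrac{1}{\sqrt{2}}(\ket{i} \pm \ket{j})$ using only Hadamard and NOT gates on $\ket{0\cdots 0}$, apply the channel $\Phi_C$ (implemented by the input circuit) to each, run the SWAP test on the two outputs, and accept on outcome~$0$.

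The observation that keeps this protocol in $\mathrm{QCMA}$ rather than merely $\mathrm{QMA}$ is that the upper bound of \Cref{eq:Technical max} with $k=2$ is \emph{attained} by an orthogonal pair built from the classical data $(i,j)$. Because $\Phi_C$ only reads computational-basis diagonal entries, one has
\[
\Phi_C(\ketbra{+_{ij}}) \;=\; \Phi_C(\ketbra{-_{ij}}) \;=\; \tfrac{1}{2}(\sigma_i+\sigma_j),
\]
and therefore
\[
\Tr\bigl(\Phi_C(\ketbra{+_{ij}})\,\Phi_C(\ketbra{-_{ij}})\bigr) \;=\; \tfrac{1}{4}\Tr\bigl((\sigma_i+\sigma_j)^2\bigr).
\]

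With this identity the analysis is immediate. For completeness, if $C \in Y$ then there exist orthogonal $\ket{u},\ket{v}$ with channel overlap at least $c$; applying \Cref{eq:Technical max} yields some pair $(i,j)$ with $\tfrac{1}{4}\Tr((\sigma_i+\sigma_j)^2) \geq c$, and submitting that pair makes the SWAP test accept with probability at least $\tfrac{1}{2}+\tfrac{c}{2}=c'$. For soundness, if $C \in N$ then every classical witness $(i,j)$ gives rise to the orthogonal pair $\ket{\pm_{ij}}$, whose channel overlap is at most $s$ by hypothesis, so the SWAP test accepts with probability at most $\tfrac{1}{2}+\tfrac{s}{2}=s'$.

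The only step that really requires thought is the tightness identity displayed above, which takes the abstract upper bound of \Cref{thm:Max overlap} and pins it to a concrete orthogonal witness describable by a classical index pair. After that, the proof is a direct transcription of the argument for $\mathrm{SO}_{c,s}$, and the gate set (Hadamard, NOT, Toffoli) needed to prepare $\ket{\pm_{ij}}$ and implement the controlled-SWAP lies in the assumed universal set, so no compatibility issues arise with the $\mathrm{QCMA}$ verifier framework.
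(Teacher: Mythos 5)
Your proposal is correct and takes essentially the same approach as the paper: the verifier receives a classical index pair $(i,j)$, applies $\Phi_C$ to $\tfrac{1}{\sqrt{2}}(\ket{i}\pm\ket{j})$, runs the SWAP test on the outputs, and \Cref{eq:Technical max} supplies completeness while the orthogonality of the pair supplies soundness. The tightness identity $\Phi_C(\ketbra{+_{ij}})=\Phi_C(\ketbra{-_{ij}})=\tfrac{1}{2}(\sigma_i+\sigma_j)$ that you spell out is exactly what the paper leaves implicit in its one-line appeal to \cref{eq:Technical max}.
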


\begin{proof}
Given $C \in \mathcal{C}_{\mathrm{C\text{-}Q}}$, let $\Phi_{C}: \mathbb{M}_n(\mathbb{C}) \rightarrow \mathbb{M}_d(\mathbb{C})$ be the corresponding classical-quantum channel, and let $\sigma_1, \dots, \sigma_n \in \mathbb{M}_d(\mathbb{C})$ where $\Phi(\ketbra{i}) =\sigma_i$ be the states that define its action.
The verifier expects a proof of the form $\ket{i,j} \in \mathbb{C}^n \otimes \mathbb{C}^n$, with $i \neq j$.  The verifier then applies the channel $\Phi_C$ to both $\ket{u}=\frac{1}{\sqrt{2}} ( \ket{i} + \ket{j})$ and $\ket{v}=\frac{1}{\sqrt{2}} ( \ket{i} - \ket{j})$ and runs the SWAP test on the resulting states. She accepts if and only if the SWAP test accepts. By \cref{eq:Technical max} the resulting completeness/soundness of $c',s'$ follows.
    \end{proof}

We remark that \Cref{thm:Max overlap} can similarly be used to show that the $k$-Clique problem of \cite{CM23} for C-Q channels is in $\mathrm{QCMA}$. We next turn to establishing hardness for the SO and LO problems.

\begin{theorem}\label{thm:SO-Hardness}
   There exist constants $c,s$ such that promise problem $\mathrm{SO}_{c,s}$ is QCMA-hard.
\end{theorem}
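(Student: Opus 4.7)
The plan is to reduce an arbitrary language $L \in \mathrm{QCMA}$ to $\mathrm{SO}_{c,s}$ for appropriate constants $c, s$. Using the perfect-completeness theorem $\mathrm{QCMA} = \mathrm{QCMA}_{1,\epsilon}$ recalled in the preliminaries, I may assume $L$ has a classical-witness verifier $V$ with completeness $1$ and soundness $\epsilon \le 1/8$. Given an input $x$, let $V(x)$ denote the corresponding circuit acting on $p := p(|x|)$ classical proof bits with single-qubit output.

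From $V(x)$ I would construct a circuit $C$ implementing a C-Q channel $\Phi_C$ on $p+1$ classical input bits, defined by
\[
\sigma_{0,y} := \ketbra{0} \qquad \text{and} \qquad \sigma_{1,y} := \Phi_{V(x)}(\ketbra{y}).
\]
Concretely, $C$ reads the leading input bit and, conditioned on its value, either prepares $\ket{0}$ on the output register or runs $V(x)$ on $y$. This is clearly polynomial-time computable from the description of $V(x)$, and the resulting map is a valid C-Q channel.

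In the YES case, any accepting proof $y^\ast$ of $V(x)$ gives $\sigma_{1,y^\ast} = \ketbra{1}$ by perfect completeness, so the orthogonal pair $\ket{0, 0^p}, \ket{1, y^\ast}$ achieves overlap $\Tr(\sigma_{0,0^p}\,\sigma_{1,y^\ast}) = 0$, placing $C$ in the yes-set of $\mathrm{SO}_{1, 1/2}$. In the NO case, every $y$ satisfies $\braket{0}{\sigma_{1,y}}{0} \ge 1 - \epsilon$; writing $\sigma_{1,y}$ as a $2 \times 2$ matrix in the computational basis, the PSD constraint forces its off-diagonal entry $\alpha_y$ to satisfy $|\alpha_y|^2 \le p_y(1-p_y) \le \epsilon$. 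A direct expansion then yields
\[
\Tr(\sigma_{1,y}\,\sigma_{1,y'}) \;\ge\; (1-\epsilon)^2 - 2\epsilon \;\ge\; 1 - 4\epsilon \;\ge\; \tfrac{1}{2},
\]
while $\Tr(\sigma_{0,y}\,\sigma_{1,y'}) = \braket{0}{\sigma_{1,y'}}{0} \ge 1-\epsilon$ and $\Tr(\sigma_{0,y}\,\sigma_{0,y'}) = 1$ are at least $1/2$ trivially. Invoking \Cref{thm:Technical} (specifically \eqref{eq:Technical min}), every pair of orthogonal states $\ket{u}, \ket{v}$ then produces overlap at least $1/2$, so $C$ is a no-instance.

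The main obstacle I anticipate is controlling the off-diagonal contributions to $\Tr(\sigma_{1,y}\sigma_{1,y'})$ so they do not spoil the soundness bound; this is handled by amplifying the verifier enough that the PSD bound $|\alpha_y| \le \sqrt{\epsilon}$ is comfortably dominated by the diagonal gap. Combined with the containment \Cref{thm:SO-containment}, this reduction gives $\mathrm{QCMA}$-completeness for a constant-gap version of the Small Overlap Problem, and in particular the choice $c = 1$ delivers \Cref{thm:exact case} as a corollary.
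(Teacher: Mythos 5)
Your proposal is correct, and it follows the same high-level skeleton as the paper's proof: reduce directly from $\mathrm{QCMA}_{1,\epsilon}$ using perfect completeness \cite{JKNN12}, build a C--Q channel out of the verifier circuit, exhibit a zero-overlap pair of computational basis states in the YES case (so $c=1$, which is what \Cref{thm:exact case} needs), and invoke \cref{eq:Technical min} to lift the basis-state soundness bound to all orthogonal pairs in the NO case. Where you genuinely differ is the gadget. The paper's channel measures the verifier's output and re-encodes it classically: on witness $y$ it outputs $p_y\ketbra{y_1,\neg y_1} + (1-p_y)\ketbra{0,0}$, so every output state is diagonal in the computational basis, all pairwise overlaps are computed exactly ($p_yp_z+(1-p_y)(1-p_z)$ or $(1-p_y)(1-p_z)$ according to whether $y_1=z_1$), and the YES-case orthogonality comes from encoding the witness's first bit rather than from an added reference input. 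You instead keep the verifier's output qubit as a genuine quantum state $\Phi_{V(x)}(\ketbra{y})$, add a flag bit selecting the reference state $\ketbra{0}$, and must then control the coherences via the PSD bound $|\alpha_y|^2\le p_y(1-p_y)\le\epsilon$, which is where your requirement $\epsilon\le 1/8$ and the estimate $(1-\epsilon)^2-2\epsilon\ge 1-4\epsilon\ge\tfrac12$ enter. Both constructions are polynomial-time computable and both yield constants ($c=1$, $s=\tfrac12$ for yours). What the paper's classical post-processing buys is that no matrix-analysis estimates are needed at all in the reduction (the only inequality is $(1-\epsilon)^2>\tfrac12$); what yours buys is a slightly leaner channel (single-qubit output, no measurement of the verifier's decision) at the cost of the extra off-diagonal argument. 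Your proof is complete as stated and would serve as a valid alternative to the paper's.
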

\begin{proof}
    
We will show this directly from the definition of QCMA. To start we will be using the perfect completeness property which allows us to assume $\mathrm{QCMA}= \mathrm{QCMA}_{1, \epsilon}$, for some negligible function $\epsilon$ \cite{JKNN12}. Suppose that $L=(Y,N)$ is a language in $\mathrm{QCMA}_{1, \epsilon}$. For each instance $x$ of $L$ we let $\Phi_{V(x)}$ denote the corresponding verification channel, and given classical witness $y$ we let $p_y$ denote the probability that $\Phi_{V(x)}$ accepts $y$, $ p_y=  \braket{1}{\Phi_{V(x)}(\ketbra{y})}{1}$.

Given instance $x$ we let $\Phi_x$ be the C–Q channel determined by the following procedure: first measure the input in the standard basis. Run the verification circuit $\Phi_{V(x)}$ on the resulting basis vector $y$, and if it accepts, return $\ketbra{y_1, \neg y_1}$ where $y_1$ is the first bit of $y$. On the other hand, if the verification circuit fails, return $\ketbra{0,0}$. The description of a circuit which implements $\Phi_x$ can be computed in polynomial time from $x$ and the action of the channel is given by
\[
\Phi_x(\rho) = \sum_y \Tr( \ketbra{y} \rho) \Big(p_y\ketbra{y_1, \neg y_1} + (1-p_y)\ketbra{0,0}\Big).
\]
For $z \neq y$, the overlap of $\ketbra{y}$ and $\ketbra{z}$ after passing through the channel is
\begin{equation*}
\Tr\left(\Phi_x(\ketbra{y})\Phi_x(\ketbra{z}) \right) = \begin{cases}
        p_y p_z + (1-p_y)(1-p_z) & \text{ if } y_1 =z_1\\ (1-p_y)(1-p_z) & \text{ if } y_1 \neq z_1.
    \end{cases}
\end{equation*}
If $x \in N$, then by \cref{eq:Technical min} we have, for every pair of orthogonal states $\ket{u},\ket{v}$,
\[
\Tr\bigl(\Phi_x(\ketbra{u})\,\Phi_x(\ketbra{v})\bigr)
\;\geq\;(1-\epsilon)^2,
\]
which is greater than $\frac{1}{2}$ for large enough $n$.

On the other hand, if $x \in Y$ then there exists $y$ such that $p_y=1$. Taking $z$ satisfying $y_1 \neq z_1$ gives \[\Tr\left(\Phi_x(\ketbra{y})\Phi_x(\ketbra{z}) \right) =0.\]
\end{proof}

The proof of \Cref{thm:SO-Hardness} yields hardness for the case $c=1$, and together with \Cref{thm:SO-containment} establishes \Cref{thm:exact case}.  
A similar reduction shows hardness for the Large Overlap Problem.

\begin{theorem}
   There exist constants $c,s$ such that the promise problem $\mathrm{LO}_{c,s}$ is QCMA-hard.
\end{theorem}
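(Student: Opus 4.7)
The plan is to mirror the proof of \Cref{thm:SO-Hardness}, but to design the outputs of the reduction channel so that in the YES case two distinct inputs produce (nearly) identical outputs, while in the NO case every pair of outputs is spread apart. Starting from $\mathrm{QCMA} = \mathrm{QCMA}_{1,\epsilon}$ and given $L=(Y,N)\in\mathrm{QCMA}_{1,\epsilon}$ with verifier $V$ and acceptance probabilities $p_y := \braket{1}{\Phi_{V(x)}(\ketbra{y})}{1}$, I would reduce to $\mathrm{LO}_{c,s}$ via the C-Q channel
\[
\Phi_x(\ketbra{y}{y}) \;=\; p_y\,\ketbra{0^m}{0^m} \;+\; (1-p_y)\,\tfrac{I_{2^m}}{2^m},
\]
where $m$ is a sufficiently large polynomial in $|x|$. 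This is implementable by a polynomial-size circuit: measure the input, run $V$ on $y$, and conditionally prepare either $\ket{0^m}$ or the maximally mixed state on $m$ qubits (the latter via qubit-wise preparation of one half of an entangled pair, followed by partial trace).

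For the analysis I would apply \Cref{eq:Technical max}, reducing the maximum overlap over orthogonal pairs to $\max_{i\neq j}\tfrac14\Tr((\sigma_i+\sigma_j)^2)$. Expanding the outputs gives
\[
\tfrac{1}{4}\Tr\bigl((\sigma_y+\sigma_z)^2\bigr) \;=\; \tfrac{(p_y+p_z)^2}{4}\Bigl(1-\tfrac{1}{2^m}\Bigr)\;+\;\tfrac{1}{2^m}.
\]
If $x \in Y$, pick $y^*$ with $p_{y^*}=1$ and any $z\neq y^*$; the orthogonal pair $\ket{u}=\tfrac{1}{\sqrt 2}(\ket{y^*}+\ket{z})$, $\ket{v}=\tfrac{1}{\sqrt 2}(\ket{y^*}-\ket{z})$ then achieves overlap at least $\tfrac14+\tfrac{3}{4\cdot 2^m}$. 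If $x \in N$, then $p_y,p_z\leq\epsilon$ for all $y,z$, so the displayed quantity is at most $\epsilon^2+2^{-m}$ for every $i\neq j$, and \Cref{eq:Technical max} extends this bound to every orthogonal pair. Choosing $m$ polynomial in $|x|$ and recalling that $\epsilon$ is negligible yields a constant gap, e.g.\ $c=1/4$ and $s=1/8$ for large enough $|x|$.

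The main obstacle I expect is selecting the output ansatz: any direct analogue of the pure-output construction in \Cref{thm:SO-Hardness} will collapse in the NO case, since all rejecting inputs then produce a common low-rank state and the overlap sits near $1$ rather than going to $0$. Diluting the rejecting output with the maximally mixed state is precisely what suppresses the NO overlap, at the cost of lowering the YES overlap from $1$ to roughly $1/4$. Once this design is in place, the remainder is a short arithmetic application of \Cref{thm:Technical}.
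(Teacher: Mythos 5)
Your proposal is correct and takes essentially the same approach as the paper: the paper's proof is exactly your construction with $m=1$ (accept $\to\ketbra{0}$, reject $\to$ the single-qubit maximally mixed state $\mu=\tfrac12 I_2$), uses the same $\tfrac{1}{\sqrt2}(\ket{y}\pm\ket{z})$ witness pair for completeness, and applies \cref{eq:Technical max} for soundness, obtaining constants $c=5/8$, $s=9/16$. Your choice of polynomially many output qubits only drives the baseline overlap toward $0$ rather than $1/2$, giving different but equally valid constants.
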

\begin{proof}
    We will use a similar approach as above. Suppose $L=(Y,N)$ is a language in $\mathrm{QCMA}_{1,\epsilon}$ with corresponding verification channel $\Phi_{V(x)}$. Let $\mu$ denote the $1$ qubit maximally mixed state $\mu = \frac{1}{2} I_2$.
    
    Given instance $x$ we let $\Phi_x$ be the classical–quantum channel determined by the following procedure: first measure the input in the standard basis and run the verification $\Phi_{V(x)}$ on the resulting basis vector $y$. If it accepts, return $\ketbra{0}$, and if it rejects, return $\mu$. The description of a circuit which implements $\Phi_x$ can be computed in polynomial time from $x$ and the action of the channel is given by
    \[
    \Phi_x(\rho) = \sum_y \Tr(\ketbra{y}\rho) \Big(p_y\ketbra{0}  + (1-p_y)\mu\Big).
    \]
For $z \neq y$, 
\begin{equation}\label{eq:H2}
\frac{1}{4}\Tr\left( (\Phi_x(\ketbra{y})+ \Phi_x(\ketbra{z}))^2 \right) = \frac{1}{2} + \frac{(p_y + p_z)^2}{8}.
\end{equation}
If $x \in N$ then for all $y \neq z$ the right hand side of \cref{eq:H2} is bounded above by $\frac{1}{2}(1+\epsilon^2)$ and by \cref{eq:Technical max} we have, for every pair of orthogonal states $\ket{u},\ket{v}$,
\[
\Tr\bigl(\Phi_x(\ketbra{u})\,\Phi_x(\ketbra{v})\bigr)
\;\leq\;\frac{9}{16},
\]
for sufficiently large $n$. On the other hand if $x \in Y$ then let us take $y$ to satisfy $p_y=1$ and take any $z \neq y$. Then $\ket{u}= \frac{1}{\sqrt{2}}( \ket{y} + \ket{z})$ and $\ket{v} =\frac{1}{\sqrt{2}}( \ket{y} - \ket{z})$ are two orthogonal unit vectors and the right-hand side of \cref{eq:H2} is greater than or equal to $\frac{5}{8}$.

\end{proof}

\bibliographystyle{bibtex/bst/alphaarxiv.bst}
\bibliography{bibtex/qcma}

\end{document}